\DeclareMathOperator{\EV}{\mathbb{E}}
\newtheorem*{theorem*}{Theorem}
\newtheorem*{corollary*}{Corollary}
\newtheorem{defn}{Definition}
\newtheorem{lemma}{Lemma}
\newcolumntype{C}{>{$}c<{$}}
\newcolumntype{L}{>{$}l<{$}}
\newcolumntype{R}{>{$}r<{$}}
\newcommand*{\mvcenter}[1]{\vcenter{\hbox{$\displaystyle #1$}}}
\newcommand{\Id}{\mathds{1}}
\newcommand{\vect}[1]{\boldsymbol{#1}}
\newcommand{\av}{{\vect{a}}}
\newcommand{\bv}{{\vect{b}}}
\newcommand{\cv}{{\vect{c}}}
\newcommand{\dv}{{\vect{d}}}
\newcommand{\zv}{{\vect{z}}}
\newcommand{\xv}{{\vect{x}}}
\newcommand{\yv}{{\vect{y}}}
\newcommand{\uv}{{\vect{u}}}
\newcommand{\vv}{{\vect{v}}}
\newcommand{\vparam}{\vgamma,\vbeta}
\newcommand{\Am}{Y}
\newcommand{\M}{{\rm m}}
\newcommand{\vgamma}{\vect{\gamma}}
\newcommand{\vbeta}{\vect{\beta}}
\newcommand{\gbbraket}[1]{\braket{\gamma,\beta|#1|\gamma,\beta}}
\newcommand{\bgbbraket}[1]{\braket{\vgamma,\vbeta|#1|\vgamma,\vbeta}}
\renewcommand{\P}[1]{{[#1]}}
\newcommand{\EVJ}[1]{\EV_J\left[#1\right]}
\newcommand{\I}{\mathcal{I}}
\newcommand{\Ac}{B}
\newcommand{\Suv}{S_{\uv,\vv}}
\newcommand{\Suvxy}{S_{\uv,\vv,\xv,\yv}}
\newcommand{\suv}{s_{\uv,\vv}}
\newcommand{\buv}{b_{\uv,\vv}}
\newcommand{\luv}{\ell_{\uv,\vv}}
\newcommand{\Qf}[1]{Q_{#1}^{n_{#1}} Q_{\bar{#1}}^{n_{\bar{#1}}}}
\newcommand{\Ff}[1]{|F_{#1}|^{-t_{#1}}}
\newcommand{\gf}[2]{(g_{#1#2}^{n_{#2}} g_{#1\bar{#2}}^{n_{\bar{#2}}})^{t_{#1}}}
\newcommand{\gt}[2]{g_{#1#2}^{t_{#1}}}
\newcommand{\Xf}[1]{ X_{#1}^{n_{#1}}   X_{\bar{#1}}^{n_{\bar{#1}}}   }
\begin{document}
\title{The Quantum Approximate Optimization Algorithm and the Sherrington-Kirkpatrick Model at Infinite Size}

\author[1,2]{Edward Farhi}
\author[2]{Jeffrey Goldstone}
\author{Sam Gutmann}
\author[1,3]{Leo Zhou}

\affil[1]{\normalsize Google Inc., Venice, CA 90291, USA}
\affil[2]{Center for Theoretical Physics, Massachusetts Institute of Technology, Cambridge, MA 02139, USA}
\affil[3]{Department of Physics, Harvard University, Cambridge, MA 02138, USA}

\maketitle

\begin{abstract}
\normalsize
\onehalfspacing
The Quantum Approximate Optimization Algorithm (QAOA) is a general-purpose algorithm for combinatorial optimization problems whose performance can only improve with the number of layers $p$. While QAOA holds promise as an algorithm that can be run on near-term quantum computers, its computational power has not been fully explored. In this work, we study the QAOA applied to the Sherrington-Kirkpatrick (SK) model, which can be understood as energy minimization of $n$ spins with all-to-all random signed couplings.
There is a recent classical algorithm~\cite{Montanari2018} that, assuming a widely believed conjecture, can efficiently find an approximate solution for a typical instance of the SK model to within $(1-\epsilon)$ times the ground state energy.
We hope to match its performance with the QAOA.

Our main result is a novel technique that allows us to evaluate the typical-instance energy of the QAOA applied to the SK model.
We produce a formula for the expected value of the energy, as a function of the $2p$ QAOA parameters, in the infinite size limit that can be evaluated on a computer with $O(16^p)$ complexity.
We evaluate the formula up to $p=12$, and find that the QAOA at $p=11$ outperforms the standard semidefinite programming algorithm.
Moreover, we show concentration: With probability tending to one as $n\to\infty$, measurements of the QAOA will produce strings whose energies concentrate at our calculated value.
As an algorithm running on a quantum computer, there is no need to search for optimal parameters on an instance-by-instance basis since we can determine them in advance.
What we have here is a new framework for analyzing the QAOA, and our techniques can be of broad interest for evaluating its performance on more general problems where classical algorithms may fail. 
\end{abstract}

\clearpage

\section{Introduction}
The Quantum Approximate Optimization Algorithm~\cite{QAOA}, QAOA, consists of a shallow depth quantum circuit with $p$ layers and $2p$ parameters.  It is designed to find approximate solutions to combinatorial search problems, and like simulated annealing can be applied to almost any problem.  As $p$ increases, performance can only improve.  Even at $p=1$, worst case performance guarantees have been established~\cite{QAOA,QAOAE3Lin2}. These beat random guessing but not the best classical algorithms.  Aside from worst case we can ask how the QAOA performs on typical instances where the problem instances are drawn from a specified distribution. For example, for the combinatorial search problem E3LIN2, a worst case performance guarantee was established but the typical performance is better~\cite{QAOAE3Lin2}. For similar results on other problems see Lin and Zhu~\cite{QAOALinZhu}.

A key insight for typical instances, drawn from a specified distribution, is the landscape independence~\cite{QAOAFixedParam} of the expected value of the cost function in the parameter dependent quantum state: Given an instance of a combinatorial search problem and a set of control parameters, the QAOA quantum state depends both on the instance and the parameters.  However for large systems,  the expected value of the cost function in the quantum state depends on the parameters but not the instance, up to finite size effects. This means that for each value of $p$,  the optimal parameters are the same for typical instances coming from the distribution.  Although the optimal parameters may be known, one still needs to run the quantum computer for each instance to find a string which achieves the optimal cost function value attainable at a given $p$.

In this paper, we introduce a novel technique for evaluating the typical-instance energy of the QAOA applied to a combinatorial search problem for arbitrary fixed $p$ as the problem size goes to infinity.  
We apply the QAOA to the Sherrington-Kirkpatrick model.
This model can be seen as a combinatorial search problem on a complete graph with random couplings.  For typical instances, the value of the lowest energy is known~\cite{Parisi, Panchenko, CrisantiRizzo2002, SchmidtThesis2008}.
Assuming a widely believed conjecture about the model, there is also a recent polynomial-time classical algorithm for finding a string whose energy is between $(1-\epsilon)$ times the lowest energy and the lowest energy with high probability~\cite{Montanari2018}.
What we have accomplished is a method for calculating, in advance, what energy the QAOA will produce for given parameters at any fixed $p$ in the infinite size limit.
We also show concentration,
i.e.,  with probability tending to one as $n\to\infty$,
measurements of the QAOA will produce strings whose energies concentrate at our calculated value. 
This concentration result implies landscape independence for this model, similar to what was found for bounded degree graphs in \cite{QAOAFixedParam}.
This means that optimal parameters found for one large instance will also be good for other large instances.
The complexity of our calculation scales as $O(16^p)$, and the answer is evaluated on a computer.
Without much trouble, we have found optimal parameters up to $p=8$, and further evaluated the performance up to $p=12$ at extrapolated parameters.
Notably, our results show that the QAOA surpasses the standard semidefinite programming algorithm at $p=11$.

The paper is organized as follows. We first review the QAOA (Section \ref{sec:QAOA}) and the Sherrington-Kirkpatrick model (Section \ref{sec:SK}).
The results of the paper are summarized in Section \ref{sec:results}, where we produce a formula to calculate the typical-instance energy achieved by the QAOA for any $p$ with any given parameters, and show that the measured energy concentrates at the calculated value.
We also evaluate the formula up to $p=12$.
In Section \ref{sec:p=1}, we demonstrate how to calculate the typical-instance energy for the QAOA at $p=1$ for any finite problem size. 
In Section \ref{sec:general-p}, we prove the formula for general $p$ in the infinite size limit.
We provide further discussion in Section \ref{sec:discussion}.

\section{The Quantum Approximate Optimization Algorithm (QAOA)\label{sec:QAOA}}

We start by reviewing the Quantum Approximate Optimization Algorithm (QAOA).
Given a classical cost function $C(\zv)$ defined on $n$-bit strings $\zv=(z_1,z_2,\ldots, z_n) \in \{+1, -1\}^n$, the QAOA is a quantum algorithm that aims to find a string $\zv$ such that $C(\zv)$ is close to its absolute minimum.
The cost function $C$ can be written as an operator that is diagonal in the computational basis, defined as
\begin{equation}
C\ket{\zv} = C(\zv)\ket{\zv}.
\end{equation}
We introduce a unitary operator that depends on $C$ and a parameter $\gamma$
\begin{equation} \label{eq:UC}
U(C,\gamma) = e^{-i\gamma C}.
\end{equation}
Additionally, we introduce the operator
%\vspace{-10pt}
\begin{equation}
B = \sum_{j=1}^n X_j,
\end{equation}
where $X_j$ is the Pauli $X$ operator acting on qubit $j$, and an associated unitary operator that depends on a parameter $\beta$
%\vspace{-10pt}
\begin{equation} \label{eq:UB}
U(B,\beta) = e^{-i\beta B} = \prod_{j=1}^n e^{-i\beta X_j}.
\end{equation}
In the QAOA circuit, we initialize the system of qubits in the state
\begin{equation}
\ket{s} = \ket{+}^{\otimes n} =  \frac{1}{\sqrt{2^n}} \sum_{\zv}\ket{\zv},
\end{equation}
and alternately apply $p$ layers of $U(C,\gamma)$ and $U(B,\beta)$.
Let $\vect\gamma = \gamma_1,\gamma_2,\ldots,\gamma_p$ and $\vect\beta = \beta_1,\beta_2,\ldots,\beta_p$; then
the QAOA circuit prepares the following state
\begin{equation} \label{eq:wavefunction}
\ket{\vect\gamma, \vect\beta} = U(B,\beta_p) U(C,\gamma_p) \cdots U(B,\beta_1) U(C, \gamma_1)\ket{s}.
\end{equation}
For a given cost function $C$, the associated QAOA objective function is
\begin{equation}
\braket{\vgamma,\vbeta|C|\vgamma,\vbeta}. \label{eq:QAOA_obj}
\end{equation}
By  measuring the quantum state $\ket{\vgamma,\vbeta}$ in the computational basis, one will find a bit string $\zv$ such that $C(\zv)$ is near \eqref{eq:QAOA_obj} or better.
Different strategies have been developed to find optimal $(\vgamma,\vbeta)$ for any given instance~\cite{QAOAFixedParam, ZhouQAOA,Crooks2018}.
In this paper, however, we will show how to find, in advance, the optimal parameters of the QAOA at fixed $p$ for typical instances of the Sherrington-Kirkpatrick model.

\vspace{-5pt}
\section{The Sherrington-Kirkpatrick (SK) model\label{sec:SK}}

In this paper, we apply the QAOA to instances of the Sherrington-Kirkpatrick model, which is central to many important results in the theory of spin glass and disordered systems (see \cite{Panchenko} for a review).
This model describes a classical spin system with all-to-all couplings between the $n$ spins.
The classical cost function is
\begin{equation}
\label{eq:SK-model}
C(\zv) = \frac{1}{\sqrt{n}}  \sum_{j<k} J_{jk} z_j z_k.
\end{equation}
Each instance is specified by the $J_{jk}$'s which are independently chosen from a distribution with mean 0 and variance 1.
For example they could be $+1$ or $-1$ each with probability 1/2, or drawn from the standard normal distribution. 
In this paper, we assume $J_{jk}$ comes from a symmetrical distribution (i.e., $J_{jk}$ and $-J_{jk}$ have the same distribution), although we believe generalization to any subgaussian distribution is not too difficult.
%(They can even be drawn from an asymmetric distribution, such as one describing MaxCut on Erd\H{o}s-R\'enyi graphs with $J_{jk}=(A_{jk}-\EV[A_{jk}])/\sqrt{\text{Var}(A_{jk})}$, where $A_{jk}$ forms the adjacency matrix of each graph~\cite{DMS17,Montanari2018}.)
As $n$ goes to infinity, the classical results as well as our quantum results are independent of which distribution the $J_{jk}$'s come from.  In a celebrated result, Parisi calculated the lowest energy of \eqref{eq:SK-model} for typical instances as $n$ goes to infinity. The first step is to consider the partition function
\begin{equation}
				\sum_{\zv} \exp ( - C(\zv)/T )
\end{equation}
which is dominated by the low energy configurations for low temperature $T$.  Averaging this quantity over the $J_{jk}$'s is easy but gives the wrong description at low temperature.  One needs to first take the log and then average:
\begin{equation}
				\frac{T}{n} \EV_J \Big[\log \Big( \sum_{\zv} \exp ( - C(\zv)/T ) \Big)\Big].
\end{equation}
In the limit of $n\to\infty$, this quantity has been proven to exist, although evaluating this quantity is extremely challenging.  Using the method of replica symmetry breaking, Parisi~\cite{Parisi} provides a formula (for proof, see~\cite{Panchenko}) that, when numerically evaluated~\cite{CrisantiRizzo2002, SchmidtThesis2008, Montanari2018}, shows for typical instances,
\begin{equation}
	\lim_{n\to\infty} \min_{\zv} \frac{C(\zv)}{n} = - 0.763166\ldots.
	\label{eq:Parisi}
\end{equation}
We note that this model has a phase transition at $T = 1$, and that at this temperature,  $C/n$ is $-0.5$ for typical instances.

The result just given tells us what the lowest energy of the SK model is, but it does not tell us how to find the associated configuration.
How hard is this? 
Simulated annealing does rather well in getting a string whose energy is close to the minimum.
For example running simulated annealing starting at a temperature above the phase transition, say $T=1.3$, and bringing it down to near $T=0$ in 10 million steps on a 10000 bit instance achieves $-0.754$. 
It is not believed that simulated annealing can go all the way to the lowest energy configuration~\cite{ParisiPrivateComm}. 
We can also ask how zero-temperature simulated annealing performs.
Here the update rule is flip a spin if it lowers the energy, and don't flip if it does not. 
This algorithm comes down to around $-0.71$.

Another natural approach to solve this problem is a convex relaxation method such as semidefinite programming. The simplest such method is called spectral relaxation, where the solution is obtained by taking the lowest eigenvector of the $n\times n$ matrix $\vect{J}$ (whose elements are given by $J_{jk}$), and rounding each entry to $\pm 1$ based on its sign. This yields $C/n=-2/\pi + o(1)\approx-0.6366$~\cite{ALR87}, where $o(1)$ is a number that vanishes as $n\to\infty$. The standard semidefinite programming algorithm with ``sign rounding'' yields the same result~\cite{MS16,BKW19}.

Recently, assuming a widely believed yet unproven conjecture that the SK model has ``full replica symmetry breaking,'' Montanari gave an algorithm~\cite{Montanari2018} that finds a string whose energy is below $(1- \epsilon)$
times the lowest energy for typical instances. The run time of this algorithm is $c(\epsilon) n^2$, and the function $c(\epsilon)$ is an inverse polynomial of $\epsilon$.

\section{Summary of Results}
\label{sec:results}
We analyze how the QAOA performs on the SK model by evaluating
\begin{equation}
\EVJ{\bgbbraket{C/n}} \quad \text{and} \quad \EVJ{\bgbbraket{(C/n)^2}}
\end{equation}
in the infinite size $n\to\infty$ limit.
Note that both $C$ and $\ket{\vparam}$ depend on $J$.
Our main result, which will be proved in Section~\ref{sec:general-p}, is the following:
\begin{theorem*}\label{thm}
For any $p$ and any parameters $(\vparam)$, we have
\begin{align} \label{eq:Vp-def}
 \lim_{n\to\infty} \EVJ{\bgbbraket{C/n}}  &= V_p(\vparam),
\end{align}
where $V_p(\vparam)$ has an explicit formula that we give below in Eq.~\eqref{eq:Vp-formula}.
Furthermore, we have
\begin{align} \label{eq:second-moment}
  \lim_{n\to\infty} \EVJ{\bgbbraket{(C/n)^2}} =  [V_p(\vparam)]^2.
\end{align}
\end{theorem*}

Note the above theorem easily implies the following corollary which shows that the QAOA exhibits strong concentration properties:

\begin{corollary*}[Concentration]
At any fixed $p$ and for any parameters $(\vparam)$, the energy produced by the QAOA on almost every instance of the SK model \emph{concentrates} at $V_p(\vparam)$ in the limit as $n\to\infty$.
Specifically, we have \emph{concentration over instances}, i.e., for every $\epsilon > 0$,
\begin{align} \label{eq:concentration-instances}
\mathbb{P}_J \left(\Big|\bgbbraket{C/n} - V_p(\vparam)\Big| > \epsilon \right) \to 0
 \qquad \text{as} \qquad n\to\infty.
\end{align}
We also have \emph{concentration over measurements}, i.e., for every $\epsilon>0$,
\begin{align} \label{eq:concentration-measure}
%\bgbbraket{(C/n)^2} - \bgbbraket{C/n}^2  \to 0 \qquad \text{as} \qquad n\to\infty.
\EV_J \Big[\mathbb{P}_{\zv \textnormal{ measured from } \ket{\vparam}}\left( \Big|C(\zv)/n - \bgbbraket{C/n}\Big| > \epsilon \right)\Big] \to 0
\qquad \text{as} \qquad n\to\infty.
\end{align}
\end{corollary*}
\begin{proof}[Proof of Corollary]
Combining \eqref{eq:Vp-def} and \eqref{eq:second-moment}, we have
\begin{align}
\lim_{n\to\infty}\left(\EV_J[\bgbbraket{(C/n)^2}] - \EV_J^2[\bgbbraket{C/n}]\right) = 0.
\label{eq:moment-diff}
\end{align}
Since for any operator $O$,
\begin{align} \label{eq:op-ineq}
\braket{O}^2 \le \braket{O^2},
\end{align}
we have
\begin{align}
\EV_J[\bgbbraket{C/n}^2] - \EV_J^2[\bgbbraket{C/n}]
&\le
	\EV_J[\bgbbraket{(C/n)^2}] - \EV_J^2[\bgbbraket{C/n}].
\end{align}
Note that as $n\to\infty$, the right hand side vanishes, which implies
\begin{align}	\label{eq:concentrate-exp}
\lim_{n\to\infty} \left(
\EV_J[\bgbbraket{C/n}^2] - \EV_J^2[\bgbbraket{C/n}]
\right) = 0.
\end{align}
By Chebyshev's inequality, this implies \eqref{eq:concentration-instances}, which proves landscape independence~\cite{QAOAFixedParam} for the SK model.

To show concentration over measurements for fixed $J$, we subtract \eqref{eq:concentrate-exp} from \eqref{eq:moment-diff} to give us
\begin{align}
\EV_J\Big[\bgbbraket{(C/n)^2}-\bgbbraket{C/n}^2\Big] &\longrightarrow 0 \qquad \text{as} \qquad n\to\infty.
\label{eq:concentrate-string}
\end{align}
By \eqref{eq:op-ineq} for any $J$ 
\begin{equation}
\bgbbraket{(C/n)^2} - \bgbbraket{C/n}^2 \ge 0,
\end{equation}
Then \eqref{eq:concentrate-string}
shows that this variance vanishes typically as $n\to\infty$.
This implies \eqref{eq:concentration-measure} by Chebyshev's inequality.
\end{proof}

This corollary means that as $n\to\infty$, for typical instances of the SK model, upon applying the QAOA and measuring in the computational basis, we will obtain a string $\zv$ that has energy $C(\zv)/n$ close to $\bgbbraket{C/n}$, which is itself close to $V_p(\vparam)$.

\paragraph{Formula for $V_p(\vparam)$}---
We now provide instructions on how to evaluate $V_p(\vparam)$.
Start by defining
\begin{equation}
A=\{(a_1,\ldots, a_p, a_{-p}, \ldots, a_{-1}): a_{\pm j} = +1, -1\}
\end{equation}
as the set of all configurations.
We partition into $A=A_1\cup\cdots \cup A_p \cup A_{p+1}$, where
\begin{equation}
A_\ell = \{\av:  a_{-k}  = a_{k} \text{ for } p-\ell+1 < k \le p, \text{ and } a_{-p+\ell- 1} = - a_{p-\ell+1}\}
\quad
\text{for}
\quad
1\le \ell \le p,
\end{equation}
and
\begin{equation}
A_{p+1} = \{\av : a_{-k} = a_{k}  \text{ for } 1\le k \le p\} \,. 
\end{equation}
We define a $*$ operation which transforms configuration $\av$ into $\av^*$ given by
\begin{equation}
a^*_r=a_r a_{r+1}\cdots a_p
\qquad \text{and} \qquad
a^*_{-r} = a_{-r} a_{-r-1}\cdots a_{-p}
\quad
\text{for } 1\le r\le p \,.
\end{equation}
For any $\av, \bv \in A$, let
\begin{gather}
Q_\av = \prod_{j=1}^p
	(\cos\beta_j)^{1 + (a_j + a_{-j})/2}
	(\sin\beta_j)^{1 - (a_j + a_{-j})/2}
	(i)^{(a_{-j} - a_{j})/2}, \\
\Phi_\av = \sum_{r=1}^p \gamma_r (a^*_r - a^*_{-r}), \qquad
\text{and} \qquad
X_\bv = Q_\bv \exp\Big({-}\frac12{  \sum_{\av\in A_{p+1}} Q_\av \Phi^2_{\av\bv}}\Big).
\end{gather}
Furthermore, for any $\av, \bv \in A\setminus A_{p+1}$, let
\begin{gather}
\Delta_{\av, \bv} = \frac12 (\Phi^2_{\bar\av\bv} - \Phi^2_{\av\bv})
\end{gather}
where the $\bar{\av}$ operation takes configuration $\av \in A_\ell$ to $\bar\av \in A_\ell$ for $1 \le \ell \le p$ via
\begin{align}
\bar{a}_{\pm r} = \begin{cases}
a_{\pm r}, & r \neq p-\ell+1 \\
-a_{\pm r} & r = p-\ell+1
\end{cases}
\,.
\end{align}

We then bipartition $\Ac=A\setminus A_{p+1}$ into two sets $D \cup D^c$, such that if $\bv \in D$ then $\bar\bv \in D^c$.
Note $|D|=|B|/2=(2^{2p}-2^p)/2$.
For concreteness, we can use the convention where all configurations of $D$ are chosen so that their first $p$ elements have even parity.
Furthermore, for each element $\bv\in D$ we assign an index $j(\bv) \in \{1,2,\ldots, |D|\}$, such that if $j(\bv) \le j(\bv')$ then $\Phi_{\bar\bv\bv'}^2 = \Phi^2_{\bv\bv'}$ (and we say ``$\bv$ plays well with $\bv'$'').
This index can be efficiently assigned since $\bv\in A_\ell$ plays well with $\bv'\in A_{\ell'}$ if $\ell \le \ell'$ (see Lemma~\ref{lem:ordering} in Section~\ref{sec:1=1}).
With this index assignment, we have $\Delta_{k,j}=0$ for $k\le j$.

Next, for every element of $\uv\in D$ we assign a value $W_\uv$ which can be iteratively obtained as follows:
Start with $W_{|D|} = X_{|D|}$, and for $j=|D|-1, \ldots, 2, 1$ compute
\begin{equation} \label{eq:W-iter}
W_j = X_j \exp\left({\textstyle \sum_{k=j+1}^{|D|} W_k \Delta_{k,j} }\right).
\end{equation}
Then for other elements $\uv \not\in D$ we assign
\begin{equation}
W_\uv = Q_\uv 
\quad \text{when}\quad
 \uv\in A_{p+1}, 
\qquad
\text{ and }
\qquad 
W_{\bar\uv} = - W_\uv 
\quad \text{when}\quad
\bar\uv \in D^c.
\end{equation}

Finally, $V_p(\vparam)$ is obtained from all the $W_\uv$'s via the following formula:
\begin{equation} \label{eq:Vp-formula}
V_p(\vparam) = \frac{i}{2} \sum_{r=1}^p \gamma_r \left[\sum_{\uv\in A} (u^*_r + u^*_{-r})  W_\uv\right] \left[\sum_{\vv\in A} (v^*_r-v^*_{-r})W_\vv\right]\,.
\end{equation}
The memory complexity of the above computation is at most ${O}(4^p)$ for storing the vector of $W_\uv$'s which has dimension $|D|=(4^p-2^p)/2$.
Computing all the $W_{\uv}$'s requires evaluating \eqref{eq:W-iter} $|D|$ times, each involving a sum of at most $|D|$ terms.
Hence, we only need at most $O(|D|^2) = O(16^p)$ time to compute the $W_\uv$'s.
Once we have an array of $W_\uv$ in memory, computing $V_p$ takes at most $O(p \,  4^p)$ extra time as we simply add up $W_\uv$ for some $p$ pairs of subsets of $A$ and multiply the results.
The time complexity for computing $V_p$ is thus at most $O(16^p)$.

\paragraph{Numerical evaluation and optimization}---
Here, we give our results optimizing
\begin{equation}
V_p(\vgamma,\vbeta) = \lim_{n\to\infty} \EV[\bgbbraket{C/n}],
\end{equation}
with respect to the QAOA parameters $\vgamma$ and $\vbeta$.
For a given $p$, we denote the optimal value as
\begin{align}
\bar{V}_p = \min_{\vgamma,\vbeta} V_p(\vgamma, \vbeta).
\end{align}
Using a laptop to calculate $V_p(\vparam)$ to floating point precision, and applying a heuristic for optimizing QAOA parameters~\cite{ZhouQAOA}, we are able to find optimal parameters up to $p=8$.
Furthermore, by performing the computation using a GPU on the Harvard FASRC Cannon cluster, we are able to calculate $V_p(\vparam)$ up to $p=12$.

In Figure~\ref{fig:Cexp}(a), we plot the values of the energy in the infinite size limit, $V_p(\vgamma,\vbeta)$, at the parameters we have found.
The values we have found for $9\le p \le 12$ are not optimized but simply evaluated at some parameters extrapolated from those at lower $p$.
These values serve as upper bounds on the value of $\bar{V}_p$, since we are not always certain that the parameters we found are the best possible.
It is shown to decrease steadily as $p$ increases, but it is unclear where or how fast it will asymptote.
Nevertheless, the results show that the QAOA surpasses the standard semi-definite programming algorithm at $p=11$, where $\bar{V}_{11} \le -0.6393 < -2/\pi$.
The known minimum energy of $-0.763166...$ from the Parisi ansatz as in \eqref{eq:Parisi} is shown as a red line.
To confirm that our infinite size limit calculation gives reasonable predictions in the finite size case, we simulate the QAOA on 30 randomly tossed instances of the SK model with $n=26$ spins and $J_{jk}$ drawn from the standard normal distribution.
In Figure~\ref{fig:Cexp}(b), we plot the expectation value of the energy $\braket{C/n}$ for these instances, at the same optimized QAOA parameters used for the infinite size limit.
It appears to give good agreement with the values in Figure~\ref{fig:Cexp}(a), up to $O(1/\sqrt{n})$ finite size effects.

We have found that the optimal QAOA parameters for the SK model in the infinite size limit appear to have a pattern, similar to what was found in \cite{ZhouQAOA, Crooks2018}. As an example, we plot the optimal parameters for $p=8$ in Figure~\ref{fig:param}.
The full list of parameters and $\bar{V}_p$ we found for $1\le p \le 8$ are also given in Table~\ref{paramtable}.
As we progress through the QAOA circuit, the optimal parameter $\gamma_i$ appears to increase monotonically, and the parameter $-\beta_i$ appears to decrease monotonically.
We have exploited this pattern to obtain good guesses of parameters for $9\le p \le 12$, which are listed in Table~\ref{paramtable2}.

%We have attempted to fit our data of $\bar{V}_p$ for $p\le 8$ to an exponential-type function $\bar{V}_p = a\exp(-p^b/c)+d$ as well as a power-law $\bar{V}_p = a/|p+c|^b+d$, since these seem to be the simplest and most reasonable models.
%Both fits have 4 free parameters $(a,b,c,d)$ that include a variable offset.
%The results are shown in Figure~\ref{fig:fit}.
%The fit to the exponential-type function suggests that $\bar{V}_p$ may asymptote to a value above the loweswt energy predicted by Parisi. %, but it appears to underestimate the performance achievable by QAOA for $p\ge 9$ as shown in the inset of Fig.~\ref{fig:Cexp}(a).
%On the other hand, the power-law fit indicates that $\bar{V}_p$ will converge to a value that is $-0.778\pm0.037$, which is consistent with the Parisi value.
%The inset of Figure~\ref{fig:fit} shows the sum of squared errors of the two fitting models with various fixed $d$, and the power-law model appears to give a slightly better fit.
%%It also does not underestimate the performance of QAOA, and hence may be a better predictor of the asymptotic behavior of $\bar{V}_p$.
%The fit parameters we obtain for the power-law fit are $(a,b,c,d) = (1.10,0.817,1.81,-0.778) \pm (0.29,0.191,0.50,0.037)$, where the errors are 95\% confidence intervals.

\begin{figure}[tb]
\centering
\includegraphics[height=6cm]{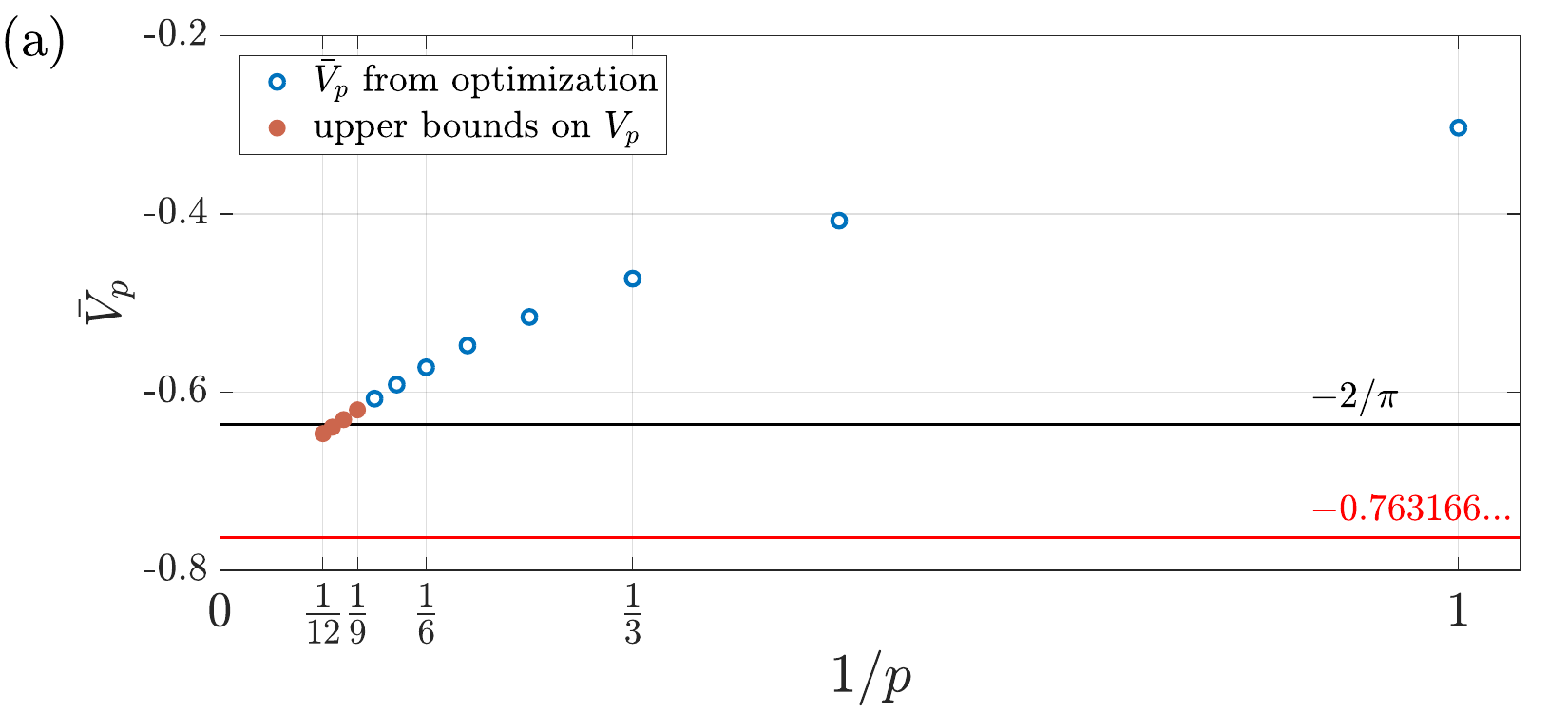}\\
\includegraphics[height=6cm]{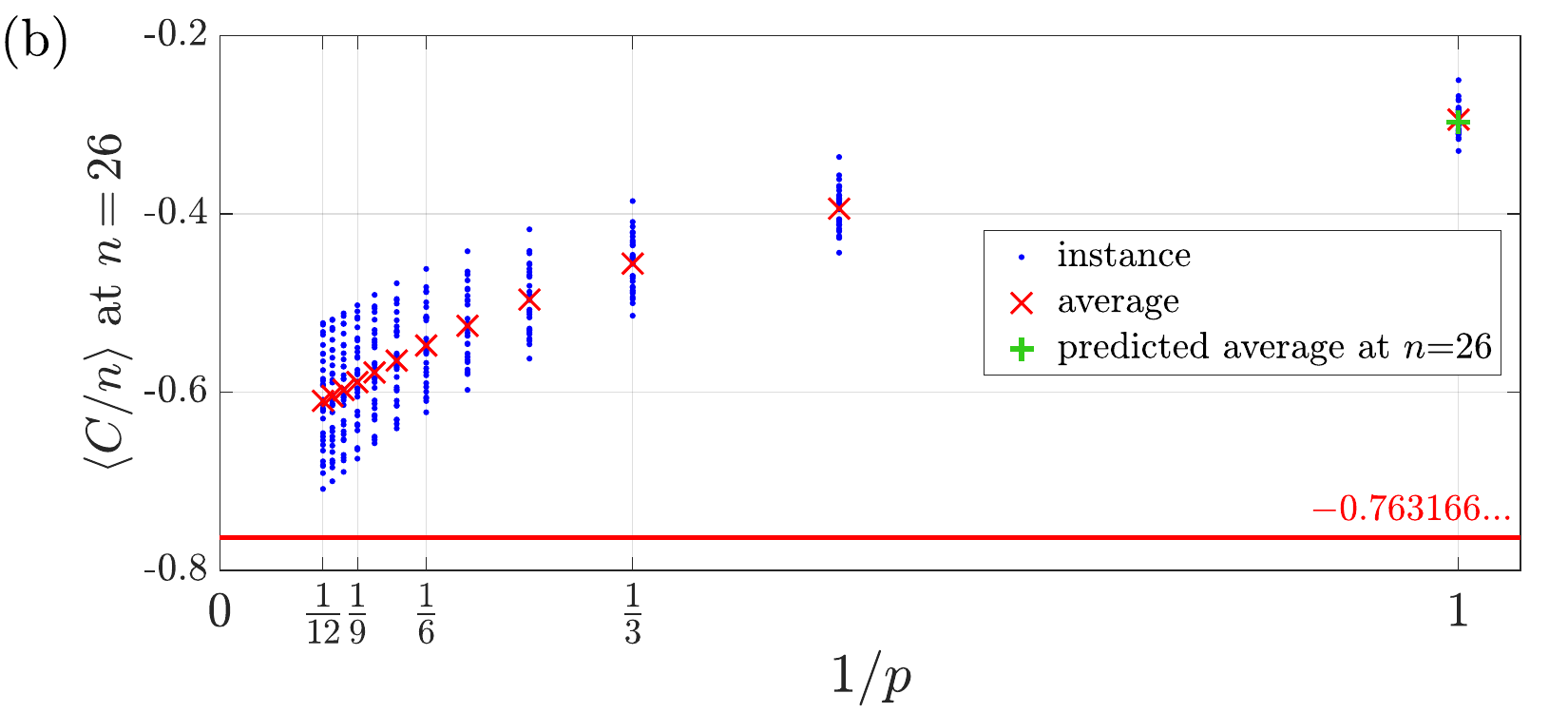}
\vspace{-5pt}
\caption{\label{fig:Cexp}
(a) Our results for $\bar{V}_p =\min_{\vparam} V_p(\vparam)$ as a function of $p$. The values for $p\le 8$ are based on optimizing $V_p(\vparam)=\lim_{n\to\infty} \EVJ{\braket{C/n}}$. The values for $9\le p \le 12$ are from evaluating $V_p(\vparam)$ at some parameters extrapolated from those at lower $p$ but not further optimized. 
Note at $p=11$, QAOA is able to surpass $-2/\pi\approx 0.6366$, which is the performance of the standard semidefinite programming algorithm~\cite{Montanari2018,BKW19}.
The full set of values of $V_p$ and the parameters are given in Table~\ref{paramtable} and \ref{paramtable2}.
(b) The expectation value of $C/n$ for 30 randomly tossed instances of 26-spin Sherrington-Kirkpatrick model with $J_{jk}$ drawn from the standard normal distribution.
The spread in values over the instances is a finite $n$ effect, since our concentration results show that the variance over instances goes to 0 as $n\to\infty$.
The average values over instances show good agreement with those above in (a).
The $p$\,=\,1 predicted average at $n=26$ from \eqref{eq:C-p1-n}, which evaluates to $-0.29726$, is also plotted.}
\end{figure}

\vspace{-10pt}
\begin{figure}[bt]
\centering
\includegraphics[height=4.3cm]{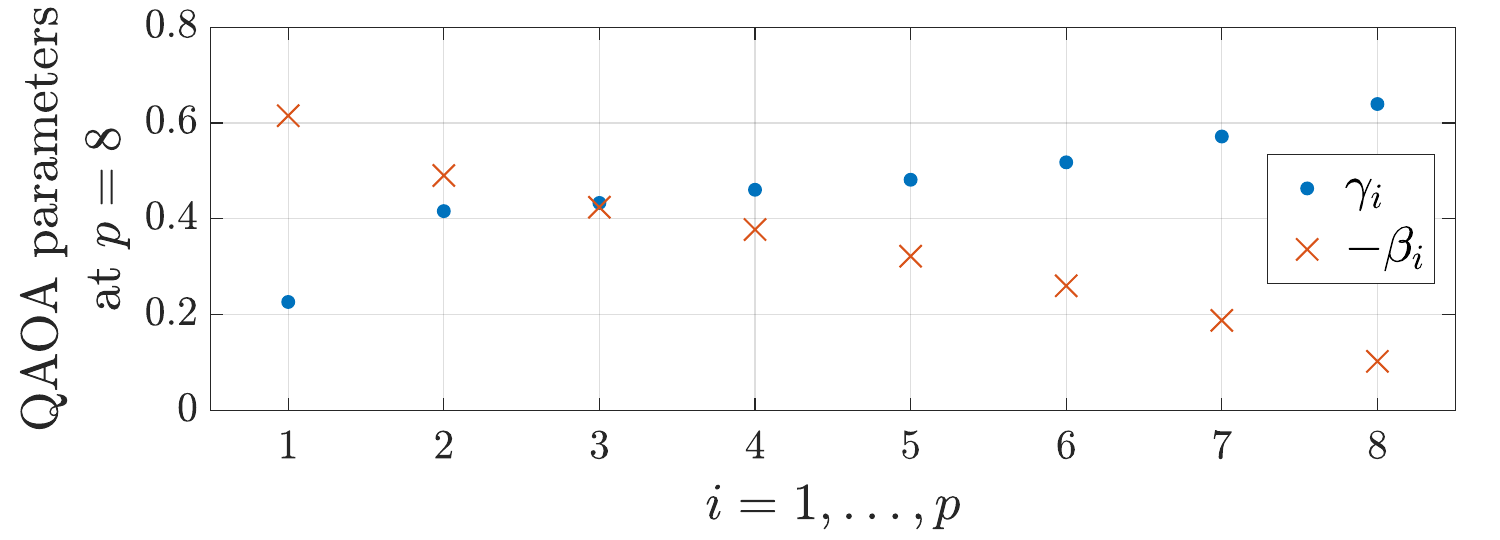}
\vspace{-5pt}
\caption{The optimal parameters found for $p=8$. 
We believe these to be globally optimal, based on the fact they appear repeatedly as the best local minimum when optimizing from $10^4$ starting points uniformly randomly drawn from the range $\gamma_i\in[-2,2]$ and $\beta_i\in[-\pi/4,\pi/4]$.
\label{fig:param}}
\end{figure}

\clearpage

\begin{table}[H]
\centering
\makegapedcells
\begin{tabular}{c|c|c|c}
$p$ & $\bar{V}_p$ & optimized $\vgamma$ & optimized $-\vbeta$ \\
\hline \hline
\multirow{2}{*}{1} & $-1/\sqrt{4e}\approx$ & \multirow{2}{*}{1/2} & \multirow{2}{*}{$\pi/8$} \\ 
& $-0.3033$ & & \\ \hline
2 & $-0.4075$ & 0.3817, 0.6655 & 0.4960, 0.2690 \\ \hline
3 & $-0.4726$ & 0.3297, 0.5688, 0.6406 & 0.5500, 0.3675, 0.2109\\ \hline
4 & $-0.5157$ & 0.2949, 0.5144, 0.5586, 0.6429 & 0.5710, 0.4176, 0.3028, 0.1729 
\\ \hline
\multirow{2}{*}{5} & \multirow{2}{*}{$-0.5476$ }
& 0.2705, 0.4803, 0.5074 & 0.5899, 0.4492, 0.3559 \\
& & 0.5646, 0.6397  & 0.2643, 0.1486
\\ \hline
\multirow{2}{*}{6} & \multirow{2}{*}{$-0.5721$} 
& 0.2528, 0.4531, 0.4750
& 0.6004, 0.4670, 0.3880 \\
& & 0.5146, 0.5650, 0.6392 & 0.3176, 0.2325, 0.1291
\\ \hline
 &
& 0.2383, 0.4327, 0.4516,
 & 0.6085, 0.4810, 0.4090, 
 \\
7 & $-0.5915$
 & 0.4830, 0.5147,
 & 0.3535, 0.2857,
 \\
 &
 & 0.5686, 0.6393
 & 0.2080, 0.1146
\\ \hline
 & 
 & 0.2268, 0.4163, 0.4333,
 & 0.6152, 0.4906, 0.4244,
 \\
8 & $-0.6073$
 & 0.4608, 0.4816, 0.5180,
 & 0.3779, 0.3223, 0.2606,
 \\
 &
 &  0.5719, 0.6396
 & 0.1884, 0.1030
\end{tabular}
\caption{\label{paramtable}
Our calculated values of $\bar{V}_p = \min_{\vgamma,\vbeta} \lim_{n\to\infty} \EV[\braket{C/n}]$ and optimized QAOA parameters $(\vgamma,\vbeta)$.
These parameters, presented in the order of $\gamma_1,\gamma_2,\ldots$, etc., are optimized using heuristics from \cite{ZhouQAOA}.
They are confirmed to be globally optimal by running optimization from $10^4$ random starts and seeing the same values repeat.
}
\end{table}

\begin{table}[H]
\centering
\makegapedcells
\begin{tabular}{c|c|c|c}
$p$ & $V_p(\vparam)$ &  $\vgamma$ &  $-\vbeta$ \\ \hline\hline
&  & 0.2166, 0.4051, 0.4208, & 0.6226, 0.4994, 0.4410, \\
 9 & $-0.6199$ &
 0.4455, 0.4641, 0.4944,
& 0.3888, 0.3527, 0.3031,
\\
& &  0.5309, 0.5801, 0.6396
& 0.2462, 0.1769, 0.0951
\\ \hline
& & 0.2081, 0.3938, 0.4087, 0.4315,
& 0.6275, 0.5059, 0.4454, 0.4089,
\\
10 & $-0.6308$ 
& 0.4473, 0.4742, 0.5015,
& 0.3676, 0.3344, 0.2866,
\\
& & 0.5385, 0.5850, 0.6396
& 0.2321, 0.1652, 0.0878
\\ \hline
& 
& 0.2007, 0.3840, 0.3983, 0.4197, 
& 0.6317, 0.5111, 0.4525, 0.4192,  
\\
11 & $-0.6393$ 
& 0.4336, 0.4583, 0.4811, 0.5096, 
& 0.3824, 0.3444, 0.3197, 0.2730, 
\\
&
& 0.5458, 0.5895, 0.6396 
& 0.2203, 0.1553, 0.0815 
\\ \hline
& & 0.1941, 0.3753, 0.3892, 0.4096,
& 0.6353, 0.5155, 0.4583, 0.4274, 
\\
12 & $-0.6466$
& 0.4221, 0.4452, 0.4654, 0.4889,
& 0.3940, 0.3600, 0.3305, 0.3076, 
\\
&
& 0.5175, 0.5524, 0.5934, 0.6396
& 0.2615, 0.2102, 0.1468, 0.0762
%
% OLDER VERSION BELOW with more sig figs
%
%&  & 0.216559, 0.405147, 0.420751, & 0.622552, 0.499421, 0.440951, \\
% 9 & $-0.619895$ &
% 0.445457, 0.464105, 0.494433,
%& 0.388832, 0.352714, 0.303080,
%\\
%& &  0.530904, 0.580053, 0.639604
%& 0.246185, 0.176900, 0.095146
%\\ \hline
%%
%& & 0.208056, 0.393833, 0.408656, 0.431467,
%& 0.627493, 0.505860, 0.445384, 0.408897,
%\\
%10 & $-0.630782$ 
%& 0.447316, 0.474193, 0.501547,
%& 0.367610, 0.334368, 0.286583,
%\\
%& & 0.538464, 0.584989, 0.639604
%& 0.232093, 0.165188, 0.087750
%\\ \hline
%%
%& 
%& 0.200652, 0.383979, 0.398277, 0.419703, 
%& 0.631688, 0.511116, 0.452503, 0.419237,  
%\\
%11 & $-0.639312$ 
%& 0.433595, 0.458272, 0.481130, 0.509619, 
%& 0.382397, 0.344402, 0.319706, 0.272995, 
%\\
%&
%& 0.545767, 0.589450, 0.639604 
%& 0.220290, 0.155293, 0.081513 
%\\ \hline
%%
%& & 0.194123, 0.375278, 0.389216, 0.409593,
%& 0.635304, 0.515487, 0.458265, 0.427434, 
%\\
%12 & $-0.646557$
%& 0.422050, 0.445229, 0.465427, 0.488902,
%& 0.393992, 0.360004, 0.330532, 0.307588, 
%\\
%&
%& 0.517537, 0.552415, 0.593362, 0.639604
%& 0.261530, 0.210213, 0.146801, 0.076175
\end{tabular}
\caption{\label{paramtable2}
The best known values (prior to Ref.~\cite{basso2022quantum}) of $V_p=\lim_{n\to\infty} \EV[\braket{C/n}]$ for $9\le p \le 12$, and the corresponding QAOA parameters $(\vparam)$. 
These parameters are guessed by extrapolating the pattern of the optimized parameters for $p\le 8$ using interpolation and fitting. Currently, we have not tried optimizing them further since a single evaluation of $V_p(\vparam)$ for $p\ge 9$ using our procedure takes too much time.}
\end{table}

\clearpage

\section{The QAOA applied to the SK model at $p=1$\label{sec:p=1}}
For the remainder of the paper we explain how to prove our main Theorem.
We start by proving it for the simplest case of the QAOA at $p=1$ where results for any problem size $n$ can be obtained.
Since there is only one parameter in each of $\vgamma$ and $\vbeta$, we denote them without boldface.
We start by evaluating
\begin{align}
& \EVJ{\gbbraket{e^{i\lambda C/n}}} \nonumber \\
	&\qquad \qquad\qquad= \EVJ{\braket{s | e^{i\gamma C} e^{i\beta B} e^{i\lambda C/n} e^{-i\beta B} e^{-i\gamma C}|s} }
	\nonumber  \\
    &\qquad \qquad \qquad=   \EVJ{~ \sum_{\zv^1, \zv^\M, \zv^2} \braket{s|e^{i\gamma C}| \zv^1} \braket{\zv^1|e^{i\beta B}| \zv^\M} e^{i\lambda C(\zv^\M)/n} \braket{\zv^\M|e^{-i\beta B}|\zv^2} \braket{\zv^2| e^{-i\gamma C}|s}}
    	 \nonumber \\
    &\qquad \qquad\qquad = \frac{1}{2^n} \sum_{\zv^1, \zv^\M, \zv^2}  
    \EVJ{ \exp \Big({ \footnotesize i\gamma [C(\zv^1) - C(\zv^2)] + i \frac{\lambda}{n} C(\zv^\M)}\Big) }
    	 \braket{\zv^1|e^{i\beta B}| \zv^\M} \braket{\zv^\M|e^{-i\beta B}|\zv^2},
\end{align}
where we have inserted complete sets of basis vectors $\ket{\zv^1}, \ket{\zv^\M}, \ket{\zv^2}$, each of which runs through all $2^n$ elements of $\{+1,-1\}^n$.
Note that $J_{jk}$ only appears in $C(\zv)=(1/\sqrt{n})\sum_{j<k} J_{jk} z_j z_k$. 
For every string $\zv^\M$, we can redefine 
%$J_{jk} \to J_{jk} z^\M_j z^\M_k$, 
$\zv^1 \to \zv^1 \zv^\M$, $\zv^2 \to \zv^2 \zv^\M$, where the product of strings $\vect{y}\vect{z}$ is understood as bit-wise product (i.e., $[\vect{y}\vect{z}]_k = y_k z_k$).
%Since we are averaging over all choices of $J_{jk}$ via $\EVJ{\cdots}$, this is the same as averaging over all choices of $J_{jk} z^\M_j z^\M_k$. 
Therefore,
\begin{align}
&\EVJ{\gbbraket{e^{i\lambda C/n}}} \nonumber \\
&\quad~~ = \frac{1}{2^n}\sum_{\zv^1, \zv^\M, \zv^2} \EVJ{ \exp \Big({\frac{i}{\sqrt{n}}\sum_{j<k}J_{jk} z^\M_j z^\M_k \left[ \gamma(z^1_j z^1_k-z^2_j z^2_k) + \lambda/n\right]} \Big) }\braket{\zv^1\zv^\M|e^{i\beta B}|\zv^\M} \braket{\zv^\M|e^{-i\beta B}|\zv^2\zv^\M}.
\end{align}
Observe that $\braket{\zv^1\zv^\M|e^{i\beta B}|\zv^\M}=\braket{\zv^1|e^{i\beta B}|\vect{1}}$, which is independent of $\zv^\M$.
Since we have assumed $J_{jk}$ follows a symmetrical distribution, averaging over $J_{jk}$ is the same as averaging over $J_{jk} z^\M_j z^\M_k$. 
(Note this is the only place we use the assumption that the distribution of $J$ is symmetrical.) 
%an asymmetrical subgaussian distribution where $\EV[J_{jk}]=0$ and $\EV[J_{jk}^2]=1$, by expanding in the large $n$ limit to see that $\EV_J[\exp(\frac{i}{\sqrt{n}} J_{jk} z_j^\M z_k^\M x)] \approx 1-\frac{1}{2n} \EV[(J_{jk}z_j^\M z_k^\M)^2 ]  x^2 \approx 1-\frac{1}{2n} x^2$.
Hence, we can drop the dependence on $\zv^\M$, and the sum over $\zv^\M$ kills the $1/2^n$ factor to yield
\begin{align}
\EVJ{\gbbraket{e^{i\lambda C/n}}} &= \sum_{\zv^1, \zv^2} \EVJ{ \exp \Big({\frac{i}{\sqrt{n}}\sum_{j<k}J_{jk}\left[ \gamma(z^1_j z^1_k-z^2_j z^2_k) + \lambda/n\right]} \Big) }\braket{\zv^1|e^{i\beta B}|\vect{1}} 
\braket{\vect{1}|e^{-i\beta B}|\zv^2} \nonumber \\
&=  \sum_{\zv^1 \zv^2} \prod_{j<k}  \EVJ{\exp \Big({\frac{i}{\sqrt{n}}J_{jk} [\phi_{jk}(\zv^1,\zv^2) + \lambda/n]} \Big)} f_\beta(\zv^1) f_\beta^*(\zv^2),
\end{align}
where we have denoted
\begin{align}
\phi_{jk}(\zv^1,\zv^2) &\equiv \gamma (z^1_j z^1_k - z^2_j z^2_k),
\end{align}
and
\begin{align}
f_\beta(\zv) &\equiv \braket{\zv|e^{i\beta B}|\vect{1}} 
= (\cos\beta)^{\# 1\text{'s in } \zv} (i \sin\beta)^{\# -1 \text{'s in } \zv}.
\label{eq:fbeta}
\end{align}

Now, let us evaluate $\EVJ{\cdots}$, where the choices of $J_{jk}$ are averaged over.
While in principle $J_{jk}$ can be drawn from any symmetrical distribution with mean 0 and variance 1, here we consider the case of the standard normal distribution,  because it is more convenient when $J_{jk}$ is in the exponential. 
Then we have $\EVJ{e^{iJ_{jk} x}}=e^{-x^2/2}$ and so
\begin{align}
\EVJ{\gbbraket{e^{i\lambda C/n}}} &=  \sum_{\zv^1 \zv^2} \prod_{j<k}  \exp\Big[-\frac{1}{2n}(\phi_{jk} + \lambda/n)^2 \Big] f_\beta(\zv^1) f_\beta^*(\zv^2) \nonumber \\
&= \sum_{\zv^1 \zv^2} \exp\Big[-\frac{1}{2n}\sum_{j<k}\phi_{jk}^2  - \frac{\lambda}{n^2} \sum_{j<k} \phi_{jk}- \frac{\lambda^2}{2n^3} \binom{n}{2} \Big] f_\beta(\zv^1) f_\beta^*(\zv^2).
\label{eq:p1-string-basis}
\end{align}

\paragraph{Changing from string to configuration basis}---
Instead of summing over all $(2^n)^2$ possible strings $\zv^1, \zv^2$ in \eqref{eq:p1-string-basis}, we can switch to a more convenient basis that we call the \emph{configuration basis}.
For a given choice of strings $(\zv^1, \zv^2)$, if we look at the $k$-th index bit of both strings, $(z^1_k, z^2_k)$, it can only be one of four possible configurations from the following set
\begin{align}
A = \{(+1, +1), ~ (+1, -1), ~ (-1, +1), ~ (-1, -1)\}.
\end{align}
We denote the number of times the $(+1,+1)$ configuration appears among the $n$ possible bits as $n_{++}$, etc.
In other words, we want to change the basis from
\begin{equation}
\{(\zv^1, \zv^2): \zv^j \in \{\pm1\}^{2n}\}
\quad \longrightarrow \quad
 \{(n_{++}, n_{+-}, n_{-+}, n_{--}):
\sum_{\av \in A} n_{\av} = n\}.
\end{equation}
We wish to express \eqref{eq:p1-string-basis} in the configuration basis.
We start by observing that
\begin{align}
f_\beta(\zv^1) &= (\cos\beta)^{n_{++}+n_{+-}} (i\sin\beta)^{n_{-+} + n_{--}},
\end{align}
and
\begin{align}
f_\beta^*(\zv^2) &= (\cos\beta)^{n_{++}+n_{-+}} (-i\sin\beta)^{n_{+-} + n_{--}}.
\end{align}
Therefore,
\begin{align}
f_\beta(\zv^1) f_\beta^*(\zv^2)  = (\cos^2\beta)^{n_{++}} (i\sin\beta\cos\beta)^{n_{-+}} (-i\sin\beta\cos\beta)^{n_{+-}}(\sin^2\beta)^{n_{--}} \equiv \prod_{\av\in A} Q_{\av}^{n_{\av}},
\end{align}
where
\begin{align}
Q_{++} = \cos^2\beta, \qquad Q_{--} = \sin^2\beta,
\qquad \text{and} \qquad Q_{-+}= -Q_{+-} = i\sin\beta\cos\beta.
\end{align}

Now we want to perform the basis change on the part that depends on $\sum_{j<k} \phi_{jk}^q(\zv^1, \zv^2)$, where $q=1,2$.
Since $\phi_{jk}=\phi_{kj}$ and $\phi_{kk}=0$, we can write
\begin{align}
\sum_{j<k} \phi_{jk}^q = \frac12 \sum_{j,k=1}^n \phi_{jk}^q = \frac12 \sum_{\av,\bv \in A} \Phi_{\av\bv}^q n_\av n_\bv
\end{align}
where the product of configurations $\av\bv$ is understood as bit-wise product (i.e., $[\av\bv]_j = a_j b_j$), and
\begin{align}
\Phi_{\av\bv} \equiv \gamma(a_1b_1 - a_2b_2).
\end{align}
We can now rewrite \eqref{eq:p1-string-basis} in the configuration basis, yielding
\begin{equation}
\EVJ{\gbbraket{e^{i\lambda C/n}}} =  
	e^{\large -\frac{\lambda^2(n-1)}{4n^2}}
	\sum_{\{n_\av\}} \binom{n}{\{n_\av\}}  
	 \exp\Big[{-\frac{1}{4n}\sum_{\av,\bv\in A}\Phi_{\av\bv}^2 n_\av n_\bv - \frac{\lambda}{2n^2} \sum_{\av,\bv\in A}\Phi_{\av\bv}n_\av n_\bv} \Big]
	 \prod_{\av\in A} Q_\av^{n_{\av}}
	 \label{eq:expC}
\end{equation}
where the multinomial coefficient is defined in general as
\begin{align} \label{eq:multinomial}
\binom{n}{\{n_\av\}}
= \binom{n}{n_1, n_2, n_3, \ldots } =  \frac{n!}{n_1 ! n_2! n_3! \cdots }
\qquad \text{subject to} \qquad
n_1 + n_2 + n_3 + \cdots = n.
\end{align}

\paragraph{Expressions of moments}---
The desired moments $\EVJ{\gbbraket{C/n}}$ and $\EVJ{\gbbraket{(C/n)^2}}$ can be calculated by differentiating with respect to $\lambda$ and then setting $\lambda=0$.
For the first moment, we get
\begin{align}
M_1 &= \EVJ{\gbbraket{C/n}} = -i\frac{\partial}{\partial\lambda}\EVJ{\gbbraket{e^{i\lambda C/n}}}\Big|_{\lambda = 0} \nonumber \\
&=	\sum_{\{n_\av\}} \binom{n}{\{n_\av\}}
	\left(\frac{i}{2n^2} \sum_{\av,\bv\in A}\Phi_{\av\bv}n_\av n_\bv\right)
	 \exp\Big({ -\frac{1}{4n}\sum_{\av,\bv\in A}\Phi_{\av\bv}^2 n_\av n_\bv}\Big)
	 \prod_{\av\in A} Q_\av^{n_{\av}}.
	 \label{eq:Cexp-p1}
\end{align}
Similarly, the second moment is
\begin{align}
M_2 &= \EVJ{\gbbraket{(C/n)^2}} = (-i)^2\frac{\partial^2}{\partial\lambda^2} \EV_J[\braket{e^{i\lambda C/n}}]\Big|_{\lambda=0} \nonumber \\
&= \sum_{\{n_{\av}\}} \binom{n}{\{n_{\av}\}} 
	\bigg[\Big(\frac{i}{2n^2}\sum_{\av,\bv\in A}\Phi_{\av\bv}n_\av n_\bv\Big)^2 + \frac{n-1}{2n^2}\bigg]
	 \exp\Big({ -\frac{1}{4n}\sum_{\av,\bv\in A}\Phi_{\av\bv}^2 n_\av n_\bv}\Big)
	 \prod_{\av\in A} Q_{\av}^{n_{\av}} \nonumber \\
&= \frac{n-1}{2n^2} + \sum_{\{n_{\av}\}} \binom{n}{\{n_{\av}\}} 
\Big(\frac{i}{2n^2}\sum_{\av,\bv\in A}\Phi_{\av\bv}n_\av n_\bv\Big)^2
	\exp\Big({ -\frac{1}{4n}\sum_{\av,\bv\in A}\Phi_{\av\bv}^2 n_\av n_\bv}\Big)
	\prod_{\av\in A} Q_{\av}^{n_{\av}},
	 \label{eq:Csq-exp-p1}
\end{align}
where we pulled out $(n-1)/(2n^2)$  out of the sum as the rest of the summand 
corresponds to setting $\lambda=0$ on the LHS of \eqref{eq:expC}, which yields $\EV_J[\braket{\Id}]=1$.
To simplify the expressions of the moments further, note that 
\begin{align}
\sum_{\av,\bv \in A} \Phi_{\av\bv}n_\av n_\bv = \sum_{\av,\bv\in A} \gamma(a_1b_1 - a_2b_2) n_\av n_\bv\,.
\end{align}
Now the only non-zero coefficient in front of $n_\av n_\bv$ is $2\gamma a_1b_1$ when $a_1b_1=-a_2b_2$.
Thus
\begin{align}
\sum_{\av,\bv \in A} \Phi_{\av\bv}n_\av n_\bv = 4\gamma(n_{++}-n_{--})(n_{+-} - n_{-+}).
\label{eq:Phi-explicit-p1}
\end{align}
Furthermore, since $\Phi_{\av\bv}^2$ can only be 0 or $(2\gamma)^2$, we have
\begin{align}
\sum_{\av,\bv\in A}\Phi_{\av\bv}^2 n_\av n_\bv
	&= 2(2\gamma)^2(n_{++}+n_{--})(n_{+-} + n_{-+}),
\label{eq:Phisq-explicit-p1}
\end{align}
where the extra factor of 2 comes from double counting in the sum.

\paragraph{Evaluating the first moment $\braket{C/n}$}---
Plugging \eqref{eq:Phi-explicit-p1} and \eqref{eq:Phisq-explicit-p1} into \eqref{eq:Cexp-p1}, we get that the first moment is
\begin{align}
M_1 
	&= \frac{i2\gamma}{n^2}\sum_{\{n_\av\}} \binom{n}{\{n_\av\}}
	(n_{++}-n_{--})(n_{+-} - n_{-+})
	 \exp \Big[{-\frac{2\gamma^2}{n}(n_{++}+n_{--})(n_{+-} + n_{-+})} \Big]
	 \prod_{\av\in A} Q_\av^{n_{\av}}.
\end{align}
To evaluate this, let $t=n_{-+}+n_{+-}$ and $n - t=n_{++} + n_{--}$.
We then sum on $t$ to get
\begin{align}
M_1 = \frac{i2\gamma}{n^2} \sum_{t=0}^n \binom{n}{t}
	\exp\Big[{-2\gamma^2 \frac{t (n - t)}{n} }\Big]
	\sum_{n_{+-}+n_{-+}=t} &\binom{t}{n_{+-}, n_{-+}} (n_{+-}- n_{-+}) Q_{+-}^{n_{+-}} Q_{-+}^{n_{-+}}
	\nonumber \\
	 \times \sum_{n_{++}+n_{--}=n-t} &\binom{n - t}{n_{++}, n_{--}} 	(n_{++} - n_{--})Q_{++}^{n_{++} } Q_{--}^{n_{--}}.
	 \label{eq:p1-moment1-sum}
\end{align}
Observe the following identity
\begin{align}\label{eq:binom-sum-identity}
\sum_{p+q=s} \binom{s}{p,q} (p-q) x^p y^q = s (x-y)(x+y)^{s-1}.
\end{align}
We can then evaluate 
 the sum over $n_{++}$ and $n_{--}$ to obtain
\begin{align}
\sum_{n_{++}+n_{--}=n-t} \binom{n - t}{n_{++}, n_{--}} 	(n_{++} - n_{--})Q_{++}^{n_{++} } Q_{--}^{n_{--}}
= (n-t)(\cos^2\beta - \sin^2 \beta) = (n-t)\cos 2\beta,
\end{align}
where we used the fact that $Q_{++} = \cos^2\beta$ and $Q_{--} = \sin^2\beta$.

Now for the sum over $n_{+-}$ and $n_{-+}$, we have that $Q_{+-} = - Q_{-+}= - i\sin\beta\cos\beta$.
Thus when applying \eqref{eq:binom-sum-identity}, where $x=-y$, it is only nonzero for $s=1$, in which case it is $2x$.
For the $n_{+-}$ and $n_{-+}$ sum we set $t=1$ and get
\begin{align}
\sum_{n_{+-}+n_{-+}=t} \binom{t}{n_{+-}, n_{-+}} (n_{+-}- n_{-+}) Q_{+-}^{n_{+-}} Q_{-+}^{n_{-+}}
= 2 (-i \sin\beta\cos\beta) = -i \sin2\beta.
\end{align}
Now, returning to \eqref{eq:p1-moment1-sum}, we get with $t=1$
\begin{align}
\EVJ{\gbbraket{C/n}} &= M_1 
	= \frac{i2\gamma}{n^2} n
	 \exp \big({-2\gamma^2 \frac{n-1}{n}} \big)   [-i(n-1)\sin2\beta \cos 2\beta] \nonumber \\
	 &=  \frac{n-1}{n}\gamma  \exp\big({-2\gamma^2 \frac{n-1}{n}} \big)\sin 4\beta.
	 \label{eq:C-p1-n}
\end{align}
Here at $p=1$, we have obtained a formula for any finite $n$ when $J_{jk}$ are drawn from the standard normal distribution.
We can carry out a similar calculation for when $J_{jk}$ are uniformly drawn from $\{+1,-1\}$, and obtain a slightly different answer.
Nevertheless, in the infinite size limit, both answers agree and become the following
\begin{align} \label{eq:C-p1}
\boxed{
V_1(\gamma,\beta) \equiv \lim_{n\to\infty} \EVJ{\gbbraket{C/n}} = \gamma e^{-2\gamma^2 }\sin4\beta. 
}
\end{align}
This is minimized at $\gamma=1/2$ and $\beta=-\pi/8$, where
\begin{align}
V_1(1/2,-\pi/8) = -1/\sqrt{4e} \approx -0.303265.
\end{align}

\paragraph{Evaluating the second moment $\braket{(C/n)^2}$} ---
To evaluate the second moment, we plug \eqref{eq:Phi-explicit-p1} and \eqref{eq:Phisq-explicit-p1} into \eqref{eq:Csq-exp-p1} to obtain
\begin{align}
M_2 &= \frac{n-1}{2n^2} 
- \frac{4\gamma^2}{n^4}\sum_{\{n_\av\}} \binom{n}{\{n_\av\}}
	(n_{++}-n_{--})^2(n_{+-} - n_{-+})^2
	 \exp\Big[{ -\frac{2\gamma^2}{n}(n_{++}+n_{--})(n_{+-} + n_{-+})}\Big] 
	 \prod_{\av\in A} Q_\av^{n_{\av}}.
\end{align}
Note that the sum we need to do is very similar to that for the first moment \eqref{eq:p1-moment1-sum}.
We perform the same change of variable with $t=n_{-+}+n_{+-}$ and $n-t = n_{++}+n_{--}$ to get
\begin{align}
M_2 = \frac{n-1}{2n^2} - \frac{4\gamma^2}{n^4} \sum_{t=0}^n \binom{n}{t}
	e^{-2\gamma^2 t (n-t)/n }
	\sum_{n_{+-}+n_{-+}=t} 
		&\binom{t}{n_{+-}, n_{-+}} (n_{+-}- n_{-+})^2 Q_{+-}^{n_{+-}} Q_{-+}^{n_{-+}}
	\nonumber \\
	 \times \sum_{n_{++}+n_{--}=n-t} 
	 	&\binom{n - t}{n_{++}, n_{--}} (n_{++} - n_{--}) ^2Q_{++}^{n_{++} } Q_{--}^{n_{--}}.
	 \label{eq:p1-moment2-sum}
\end{align}
Using another version of the identity in \eqref{eq:binom-sum-identity},
\begin{align}
\sum_{p+q=s}\binom{s}{p, q} (p-q)^2 x^p y^q 
	=  s \big[s (x-y)^2 + 4xy \big] (x+y)^{s-2},
	\label{eq:binom-formula-2}
\end{align}
we can then evaluate the sum over $n_{++}$ and $n_{--}$ to obtain
\begin{align}
\sum_{n_{++}+n_{--}=n-t} 
	 	\binom{n - t}{n_{++}, n_{--}} (n_{++} - n_{--}) ^2Q_{++}^{n_{++} } Q_{--}^{n_{--}}
= \frac12 (n-t)[n-t+1 + (n-t-1)\cos4\beta],
\end{align}
where we plugged in $Q_{++} = \cos^2\beta$ and $Q_{--} = \sin^2\beta$.

Now for the sum over $n_{+-}$ and $n_{-+}$, we again have $Q_{+-} = - Q_{-+}= - i\sin\beta\cos\beta$.
And so in \eqref{eq:binom-formula-2} we have $x=-y$, and it is only nonzero for $s=2$, which which case it is $8x^2$.
Then
\begin{align}
\sum_{n_{+-}+n_{-+}=t} 
	\binom{t}{n_{+-}, n_{-+}} (n_{+-}- n_{-+})^2 Q_{+-}^{n_{+-}} Q_{-+}^{n_{-+}}
= 8(-i\sin\beta\cos\beta)^2 = -2\sin^2 2\beta.
\end{align}
Returning to our expression in \eqref{eq:p1-moment2-sum}, we set $t=2$ and get
\begin{align} \label{eq:M2-p=1}
M_2 
	&= \frac{n-1}{2n^2} + \frac{2\gamma^2(n-1)(n-2) [n-1+ (n-3)\cos 4\beta]}{n^3}  e^{-4\gamma^2 (n-2)/n }
	\sin^2 2\beta .
\end{align}
In the infinite size limit, this has a much simpler form
\begin{align}
\boxed{
\lim_{n\to\infty}\EVJ{\gbbraket{(C/n)^2}} = \gamma^2 e^{-4\gamma^2}\sin^2 4\beta = V_1^2(\gamma,\beta) .
}
\label{eq:Csq-p1}
\end{align}
This is exactly equal to the first moment squared in the infinite size limit.
As discussed in Section~\ref{sec:results}, this implies concentration over instances and measurements.

\paragraph{Finite-Size Effect}---
To get a sense of the effect from finite $n$, we subtract the square of \eqref{eq:C-p1-n} from \eqref{eq:M2-p=1} to get
\begin{align}
&\EVJ{\gbbraket{(C/n)^2}} - \EV_J^2[\gbbraket{C/n}]  \nonumber \\
&\qquad \qquad = \frac{1}{n}
\Big(\frac12 + 2\gamma^2 e^{-4\gamma^2}\big[\gamma^2 - (\gamma^2-1)\cos8\beta - \cos4\beta\big]
\Big) + O\Big(\frac{1}{n^2}\Big) .
\end{align}
This implies that the combined variance over instances and measurements is of order $1/n$.
So the standard deviation over each is $O(1/\sqrt{n})$.

We also note that we could find optimal parameters for finite $n$ by optimizing $\gamma$ and $\beta$ in \eqref{eq:C-p1-n}.
We see that the finite-size optimal $\gamma$ at $p=1$ differ from the infinite size limit by $O(1/n)$, while $\beta$ remains the same.

\paragraph{Generic QAOA states}---
Looking at \eqref{eq:C-p1} one concludes that for any values of $\gamma$ and $\beta$ the quantum expectation of the cost function is of order $n$.
However, as we now explain, in a generic QAOA state the expected value of the cost function is exponentially small in $n$.
To understand this apparent inconsistency, return to \eqref{eq:SK-model} where you see the factor of $1/\sqrt{n}$ in front.
Without this factor and with the $J_{jk}$'s chosen from $\{+1, -1\}$,  $\gamma$ is an angle between $-\pi$ and $\pi$.
With this convention the expected value of the cost function scales like the right hand side of \eqref{eq:C-p1} with $\gamma$ replaced by  $\sqrt{n}\gamma$.  Then, a random $\gamma$ between $-\pi$ and $\pi$ gives an exponentially small quantum expected value. 
Under this convention the expected value of the cost function is only of order $n$ in a vanishingly small subset of parameter space (i.e., the corner where $\gamma$ is of order $1/\sqrt{n}$), and exponentially small in $n$ elsewhere. 
This observation is consistent with the ``barren plateau'' prediction~\cite{barrenplateau2018} arguing that under certain conditions, the parameter search in variational quantum algorithms may be challenged by the exponentially small gradients seen at generic parameters.
%While this would ordinarily imply these algorithms are hard to train with random initialized parameters, o
What we have shown is that this issue can be addressed by avoiding random initialization in our parameter search, since we can know in advance in which corner to look.

\section{General $p$\label{sec:general-p}}
In this section we go beyond the $p=1$ calculations done in Section~\ref{sec:p=1} to prove our main Theorem for arbitrary $p$.
We will be able to evaluate the first and second moment of $C/n$ in an arbitrary QAOA state averaged over instances in the $n\to\infty$ limit, for any given $2p$ parameters.
The form of our answer will involve an iterative procedure that can be run on a classical computer with $O(16^p)$ complexity, corresponding to the formula given in Section~\ref{sec:results}.

%The form of our answer will involve an iterative procedure that needs to be run on a computer. We have implemented this routine, and although the computational complexity is $O(16^p)$ we have been able to take this out to $p=12$ and then find optimal parameters up to $p=8$. We are also able to show that in this limit,
%\begin{align}
%\lim_{n\to\infty} \EV_J[\bgbbraket{(C/n)^2}] = \lim_{n\to\infty} \EV_J^2[\bgbbraket{C/n}]
%\end{align}
%so the concentration results of Section~\ref{sec:concentration} do apply for $p >1$. 

To begin, we look at
\begin{align}
\bgbbraket{e^{i\lambda C/n}} &= \braket{s | e^{i\gamma_1 C} e^{i\beta_1 B}\cdots e^{i\gamma_p C} e^{i\beta_p B} e^{i\lambda C/n} e^{-i\beta_p B} e^{-i\gamma_p C}\cdots e^{-i\beta_1 B} e^{-i\gamma_1 C}|s}
\end{align}
and now insert $2p+1$ complete sets to get
\begin{align}
\braket{e^{i\lambda C/n}}  &= \sum_{\zv^\P{\pm1}, \ldots, \zv^\P{\pm p}, \zv^\M }
	\braket{s |\zv^\P{1}} e^{i\gamma_1 C(\zv^\P{1})} 
	\braket{\zv^\P{1}| e^{i\beta_1 B}|\zv^\P{2}} 
		\cdots e^{i\gamma_p C(\zv^\P{p})} \braket{\zv^\P{p}| e^{i\beta_p B} |\zv^\M}
		e^{i\lambda C(\zv^\M)/n} \nonumber \\
	&  \qquad \qquad \times \braket{\zv^\M |e^{-i\beta_p B} |\zv^\P{-p}} e^{-i\gamma_p C(\zv^\P{-p})}
	 \cdots \braket{\zv^\P{-2}| e^{-i\beta_1 B}|\zv^\P{-1}}  e^{-i\gamma_1 C(\zv^\P{-1})} \braket{\zv^\P{-1}|s}.
\end{align}
Here we label the $2p+1$ strings as $\zv^\P{1}, \zv^\P{2}, \ldots, \zv^\P{p}, \zv^\M, \zv^\P{-p},\ldots, \zv^\P{-2}, \zv^\P{-1}$.
This labelling is convenient because $\zv^\P{j}$ will often be paired with $\zv^\P{-j}$ in the calculations that follow.
Note each of the terms of the form say $\braket{\zv^\P{2} | e^{i\beta_3 B} |\zv^\P{3}}$ only depends on the bit-wise product $\zv^\P{2} \zv^\P{3}$.
Hence, we define
\begin{gather}
f_j(\zv \zv') \equiv \braket{\zv|e^{i\beta_j B}|\zv'}.
\end{gather}
Then we get
\begin{align} \label{eq:last-zm-sum}
\bgbbraket{e^{i\lambda C/n}} 
=  \frac{1}{2^n} \sum_{\zv^\P{\pm1}, \ldots, \zv^\P{\pm p}, \zv^\M }
	&\exp\Big[{i \sum_{r=1}^p \gamma_r [C(\zv^\P{r})-C(\zv^\P{-r})] + i\frac{\lambda}{n} C(\zv^\M)}\Big] \nonumber \\
&   \times f_1(\zv^\P{1} \zv^\P{2}) \cdots f_{p-1}(\zv^\P{p-1} \zv^\P{p}) f_{p}(\zv^\P{p} \zv^\M)
 \nonumber \\
&   \times f_{1}^*(\zv^\P{-1} \zv^\P{-2}) \cdots f_{p-1}^*(\zv^\P{1-p} \zv^\P{-p} ) f_{p}^*(\zv^\P{-p}\zv^\M ).
\end{align}
We are going to transform the $\zv^\P{j}$'s to achieve two results.
We will make each of the $f$ factors depends on only one $\zv^\P{j}$.
We will also be able to absorb $\zv^\M$ into other variables so it disappears.
To this end, for every $r=1,2,\ldots, p$, we simultaneously transform
\begin{align}
\begin{split}
\zv^\P{r} &\to \zv^\P{r} \zv^\P{r+1} \cdots \zv^\P{p} \zv^\M, \\
\zv^\P{-r} &\to  \zv^\P{-r} \zv^\P{-r-1} \cdots \zv^\P{-p} \zv^\M 
\end{split}
\end{align}
where the product of strings is understood as bit-wise product.
In the $f$ factors, we see for example
\begin{align}
\begin{split}
\zv^\P{2}\zv^\P{3} &\to (\zv^\P{2}\zv^\P{3} \cdots \zv^\P{p}\zv^\M)(\zv^\P{3}\zv^\P{4} \cdots \zv^\P{p}\zv^\M) = \zv^\P{2},\\
\zv^\P{-3}\zv^\P{-4}  &\to (\zv^\P{-3}\zv^\P{-4}\cdots \zv^\P{-p}\zv^\M)(\zv^\P{-4}\zv^\P{-5}\cdots \zv^\P{-p}\zv^\M) = \zv^\P{-3}.
\end{split}
\end{align}
In general, the products $\zv^\P{\pm r}\zv^\P{\pm(r+1)}\to \zv^\P{\pm r}$ for $1\le r \le p-1$, so the first goal is achieved.
Since $\zv^\P{\pm p} \zv^\M \to \zv^\P{\pm p}$, the second goal is also achieved.
Recall that
\begin{align}
C(\zv) = \frac{1}{\sqrt{n}} \sum_{j<k} J_{jk} z_j z_k \,.
\end{align}
Then after the transformation, $J_{jk}$ always appear together with $z^\M_j z^\M_k$ as $J_{jk} z^\M_j z^\M_k$.
Now, note that averaging over $J_{jk}$ is the same as averaging over $J_{jk} z^\M_j z^\M_k$, so we get
\begin{align}
\EVJ{\bgbbraket{e^{i\lambda C/n}} } &=
	\sum_{\zv^\P{\pm1}, \ldots, \zv^\P{\pm p}}
	\EV_J\Big[ \exp\big({\frac{i}{\sqrt{n}} \sum_{j<k} J_{jk} (\phi_{jk}  + \frac{\lambda}{n}) }\big)\Big]
		 \nonumber \\
& \quad \quad \times f_1(\zv^\P{1}) f_2(\zv^\P{2}) \cdots f_{p}(\zv^\P{p}) f_{p}^*( \zv^\P{-p}) \cdots f_{2}^*(\zv^\P{-2} ) f_{1}^*(\zv^\P{-1})
\end{align}
where
\begin{align}
\phi_{jk} &= 
	\gamma_1 \left(z^\P{1}_j z^\P{2}_j \cdots z^\P{p}_j    z^\P{1}_k z^\P{2}_k \cdots z^\P{p}_k 
	~ - ~ z^\P{-1}_j z^\P{-2}_j  \cdots z^\P{-p}_j  z^\P{-1}_k z^\P{-2}_k \cdots z^\P{-p}_k \right) + \nonumber \\
& \quad~  \gamma_2 \left(z^\P{2}_j  \cdots z^\P{p}_j z^\P{2}_k  \cdots z^\P{p}_k 
	~ - ~ z^\P{-2}_j \cdots z^\P{-p}_j  z^\P{-2}_k \cdots z^\P{-p}_k \right) + \cdots + \nonumber \\
& \quad~ \gamma_p \left(z^\P{p}_j z^\P{p}_k - z^\P{-p}_j z^\P{-p}_k\right).
\end{align}
The sum over $\zv^\M$, which no longer appears, killed the $1/2^n$ factor in front of \eqref{eq:last-zm-sum}.

For convenience, we assume each $J_{jk}$ comes from the standard normal distribution, which means $\EV_J[e^{i J_{jk} x}] = e^{-x^2/2}$.
We now average over all $J_{jk}$ to get
\begin{align}
\EVJ{\bgbbraket{e^{i\lambda C/n}} } &=
	\sum_{\zv^\P{\pm1}, \ldots, \zv^\P{\pm p}}
	\prod_{j<k} \exp\Big[-\frac{1}{2n}(\phi_{jk}+\frac{\lambda}{n})^2\Big]
	\nonumber \\
& \quad \quad \times f_1(\zv^\P{1}) f_2(\zv^\P{2}) \cdots f_{p}(\zv^\P{p}) f_{p}^*( \zv^\P{-p}) \cdots f_{2}^*(\zv^\P{-2} ) f_{1}^*(\zv^\P{-1}).
\label{eq:general-p-string-basis}
\end{align}

%% ======== CHANGING BASIS =====
\paragraph{Changing from string to configuration basis}---
Instead of summing over all $(2^n)^{2p}$ possible string values of $(\zv^\P{1},\ldots, \zv^\P{p}, \zv^\P{-p}, \ldots, \zv^\P{-1})$ in \eqref{eq:general-p-string-basis}, we will sum over a configuration basis as we did at $p=1$.
To define this configuration basis, let
\begin{align}
A =  \{+1, -1\}^{2p} =  \{\av = (a_1, a_2, \ldots, a_p, a_{-p}, \ldots, a_{-2}, a_{-1}) : a_{\pm j} \in \{+1,-1\} \},
\end{align}
where we index the $2p$ bits in each $\av$ as $1,2,\ldots, p, -p, \ldots, -2, -1$.
Now look at the $k$-th indexed bit of all the strings, that is
\begin{equation}
(z^\P{1}_k, z^\P{2}_k \ldots, z^\P{p}_k, z^\P{-p}_k, \ldots, z^\P{-2}_k, z^\P{-1}_k) \in A.
\end{equation}
We denote the number of times a given configuration $\av\in A$ occurs among the $n$ possible bits as $n_\av$, where
\begin{align}
\sum_{\av\in A} n_\av = n.
\end{align}
Note that $\av$ takes $2^{2p}$ possible values, and
\begin{align}
\sum_{\{n_{\av}\}} \binom{n}{\{n_\av\}} = \sum_{n_1, \ldots, n_{2^{2p}}} \binom{n}{n_1,\ldots, n_{2^{2p}}} = (2^{2p})^n = (2^n)^{2p},
\end{align}
where the multinomial coefficient is defined in \eqref{eq:multinomial}, so we have everything covered in the configuration basis.

We want to express the sum in \eqref{eq:general-p-string-basis}  in terms of the $\av$'s and the $n_\av$'s.
Let us start with the $f$'s, for example
\begin{align}
f_3(\zv^\P{3}) = \braket{\zv^\P{3}| e^{i\beta_3 B}|\vect{1}} 
= (\cos\beta_3)^{\# \text{ of 1's in } \zv^\P{3}} (i\sin\beta_3)^{\# \text{ of $-1$'s in } \zv^\P{3}}.
\end{align}
Let $\av = (a_1, \ldots, a_p, a_{-p}, \ldots, a_{-1})$. Then the $\cos\beta_3$ term fires when $a_3=1$ while the $i\sin\beta_3$ term fires when $a_3 = -1$.
So this $f_3$ term can be written as
\begin{align}
f_3 = (\cos\beta_3)^{\sum_\av n_\av (1+a_3)/2} (i\sin\beta_3)^{\sum_\av n_\av (1-a_3)/2}.
\end{align}
Therefore, the product of all the $f$'s is
\begin{align}
f_1(\zv^\P{1}) f_2(\zv^\P{2}) \cdots f_{p}(\zv^\P{p}) f_{p}^*( \zv^\P{-p}) \cdots f_{2}^*(\zv^\P{-2} ) f_{1}^*(\zv^\P{-1})
= \prod_{\av\in A} Q_\av^{n_\av}
\end{align}
where
\begin{align} \label{eq:Qdef}
Q_\av = \prod_{j=1}^p
	(\cos\beta_j)^{1+(a_j + a_{-j})/2}
	(\sin\beta_j)^{1-(a_j + a_{-j})/2}
	(i)^{(a_{-j} - a_j)/2} \,.
\end{align}
	
Return to \eqref{eq:general-p-string-basis} where we want to put the $\phi_{jk}$ terms into the configuration basis.
Note for $q=1,2$,
\begin{align}
\sum_{j<k} \phi_{jk}^q = \frac12 \sum_{j,k=1}^n \phi_{jk}^q
\end{align}
since $\phi_{jk}=\phi_{kj}$ and $\phi_{kk}=0$.
For any configuration $\cv\in A$, let
\begin{align} \label{eq:Phidef}
\Phi_{\cv} &= \sum_{r=1}^p \gamma_r (c_r^* - c_{-r}^*),
\end{align}
where $\cv^*$ is a transformation of $\cv$ defined as
\begin{align} \label{eq:star-op}
c^*_r=c_r c_{r+1}\cdots c_p
\qquad \text{and} \qquad
c^*_{-r} = c_{-r} c_{-r-1}\cdots c_{-p}
\quad
\text{for } 1\le r\le p \,.
\end{align}
For example, for $p=3$,
\begin{equation}
\begin{array}{lclllrrr}
\cv &= & (c_1, &c_2, &c_3, & c_{-3}, &c_{-2}, &c_{-1}),\\
\cv^* &= & (c_1c_2c_3,  &c_2 c_3, &c_3, &c_{-3} ,  & c_{-2}c_{-3} , & c_{-1} c_{-2}c_{-3}).
\end{array}
\end{equation}
Then
\begin{align} 
\sum_{j<k} \phi_{jk}^q  = \frac{1}{2}\sum_{\av,\bv \in A} \Phi_{\av\bv}^q n_\av n_\bv
\end{align}
where $\av\bv$ is understood as bit-wise product (i.e., $[\av\bv]_k =a_k b_k$).
Thus, \eqref{eq:general-p-string-basis} in the configuration basis is
\begin{align}
\EVJ{\bgbbraket{e^{i\lambda C/n}} }
	= \sum_{\{n_\av\}} \binom{n}{\{n_\av\}} &\exp\Big[{-\frac{1}{4n}\sum_{\av,\bv\in A} \Phi_{\av\bv}^2 n_\av n_\bv - \frac{\lambda}{2n^2}\sum_{\av,\bv\in A} \Phi_{\av\bv}n_\av n_\bv - \frac{\lambda^2}{2n^3}\binom{n}{2} }\Big] \prod_{\av\in A} Q_\av^{n_\av}.
	\label{eq:general-p-config}
\end{align}
which is the same as \eqref{eq:expC} in the $p=1$ case, except with expanded meaning of $\Phi_{\av\bv}$, $Q_\av$, etc.

\subsection{Organizing the sum by showing $1=1$}
\label{sec:1=1}
We are going to expand \eqref{eq:general-p-config} in $\lambda$ to get the first two moments as the coefficients of $\lambda^1$ and $\lambda^2$.
To organize our calculation we first look at the $\lambda^0$ term, which yields 1 on the left hand side of \eqref{eq:general-p-config}.
This means
\begin{align} \label{eq:1=1}
1 = \sum_{\{n_\av\}} \binom{n}{\{n_\av\}} &\exp\Big[-\frac{1}{4n}\sum_{\av,\bv\in A} \Phi_{\av\bv}^2 n_\av n_\bv \Big]
\prod_{\av\in A} Q_\av^{n_\av}.
\end{align}
However, it is not obvious how the sum on right hand side evaluates to 1. Understanding why this is true will help us evaluate the sum for the higher moments.

To make it work, we organize the sum over $\{n_\av\}$ to get cancellations.
To this end, we divide the set of configurations $A$ into $p+1$ subsets, 
\begin{equation}
A=A_1 \cup A_2 \cup \cdots \cup A_{p+1},
\end{equation}
where
\begin{align} \label{eq:Apart-def}
A_\ell = \{\av:  a_{-k}  = a_{k} \text{ for } p-\ell+1 <  k \le p, \text{ and } a_{-p+\ell- 1} = - a_{p-\ell+1}\}
\quad
\text{for}
\quad
1\le \ell \le p,
\end{align}
and
\begin{equation}
A_{p+1} = \{\av : a_{-k} = a_{k}  \text{ for } 1\le k \le p\} \,. 
\end{equation}
Recall that we index the entries of $\av$ as $(a_1,\ldots, a_p, a_{-p}, \ldots, a_{-1})$.
Subset $A_1$ has $a_{-p} = -a_{p}$, so there are $2^{2p-1}$ elements.
Subset $A_2$ has $a_{-p} = a_{p}$ and $a_{-p+1} = -a_{p-1}$, with $2^{2p-2}$ elements, etc.
We illustrate this for
$p=3$,
\begin{equation} \label{eq:A-part-example}
\begin{array}{lclllrrrr}
A_1 &: \quad     & (a_1, &a_2, &a_3, & -a_{3}, &a_{-2}, &a_{-1})   &\quad 32\text{ elements} \\
A_2 &: \quad     & (a_1, &a_2, &a_3,  & a_{3}, &-a_{2}, &a_{-1})   & \quad 16\text{ elements} \\
A_{p} &: \quad   & (a_1, &a_2, &a_3, & a_{3}, &a_{2}, &-a_{1})   & \quad 8\text{ elements} \\
A_{p+1} &: \quad & (a_1, &a_2, &a_3, & a_{3}, &a_{2}, &a_{1})   &\quad 8\text{ elements}
\end{array}
\end{equation}

We now define a ``bar'' operation that takes configuration $\av \in A_\ell$ to $\bar\av \in A_\ell$ for $1 \le \ell \le p$ via
\begin{align} \label{eq:bar-op}
\bar{a}_{\pm r} = \begin{cases}
a_{\pm r}, & r \neq p-\ell+1 \\
-a_{\pm r} & r = p-\ell+1
\end{cases}
\quad
\text{ for } \quad \av \in A_\ell, \quad 1\le \ell \le p\,.
\end{align}
Note the bar operation is its own inverse. We also never bar elements of $A_{p+1}$.
For example, for the form of $\av$'s given in \eqref{eq:A-part-example}, the corresponding $\bar\av$'s are
\begin{equation}
\begin{array}{lclrrrrrrr}
A_1 &: \quad     & (\hspace{8pt} a_1, &a_2, &-a_3, & a_{3}, &a_{-2}, &a_{-1})  & \phantom{\quad 32\text{ elements}} \\
A_2 &: \quad     & (\hspace{8pt} a_1, &-a_2, &a_3,  & a_{3}, &a_{2}, &a_{-1})  &\\
A_{p} &: \quad   & (-a_1, &a_2, &a_3, & a_{3}, &a_{2}, &a_{1})   & 
\end{array}
\end{equation}
Note $\bar{\av}$ flips the signs of two elements of $\av$.
If you look at $Q_\av$ defined in \eqref{eq:Qdef}, we see that
\begin{align}
Q_{\bar\av} = -Q_\av \,.
\end{align}
Furthermore, we note that
\begin{align}
\Phi_{\av\bar\av} = 0\,.
\end{align}

We now introduce a notion we call ``plays well with'':
\begin{defn} \label{def:playwell}
Given two configurations $\av,\bv\in A$, we say $\av$ \emph{plays well with} $\bv$ if $\Phi_{\av\bv}^2 = \Phi_{\bar\av\bv}^2.$
\end{defn}

\begin{lemma} \label{lem:ordering}
For any $p$, if $\av\in A_\ell$ and $\bv \in A_{\ell'}$, where $\ell\le \ell'$, then $\av$ plays well with $\bv$.
\end{lemma}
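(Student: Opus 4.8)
The plan is to prove Lemma~\ref{lem:ordering} by unpacking the definitions of $\Phi$, the star operation, and the bar operation, and showing that $\Phi_{\av\bv}$ and $\Phi_{\bar\av\bv}$ differ only by a sign when $\av\in A_\ell$, $\bv\in A_{\ell'}$ with $\ell\le\ell'$ --- which is even stronger than $\Phi_{\av\bv}^2=\Phi_{\bar\av\bv}^2$ and hence more than enough. First I would fix $\av\in A_\ell$ and let $m = p-\ell+1$, so that the bar operation flips precisely the two coordinates $a_m$ and $a_{-m}$, and $A_\ell$ is characterized by $a_{-p+k-1}=a_{p-k+1}$ for $1\le k<\ell$ (equivalently $a_{-r}=a_r$ for $m<r\le p$) together with $a_{-m}=-a_m$. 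The key observation is then to track how passing from $\av$ to $\bar\av$ affects each entry of the starred vector $(\av\bv)^*$, since $\Phi_{\av\bv}=\sum_{r=1}^p\gamma_r\big[(\av\bv)^*_r-(\av\bv)^*_{-r}\big]$.

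Next I would carry out the bookkeeping on $(\cv)^*$ where $\cv=\av\bv$. By definition $c^*_r = c_r c_{r+1}\cdots c_p$ and $c^*_{-r}=c_{-r}c_{-r-1}\cdots c_{-p}$. Flipping the single coordinate $c_m\to -c_m$ flips $c^*_r$ for all $r\le m$ and leaves $c^*_r$ unchanged for $r>m$; likewise flipping $c_{-m}\to -c_{-m}$ flips $c^*_{-r}$ for all $r\le m$ and leaves the rest fixed. So under the bar operation, $(\av\bv)^*_r\to -(\av\bv)^*_r$ for $1\le r\le m$, $(\av\bv)^*_{-r}\to -(\av\bv)^*_{-r}$ for $1\le r\le m$, and all entries with index of absolute value $>m$ are untouched. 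Therefore
\begin{align}
\Phi_{\bar\av\bv} - \Phi_{\av\bv} = -2\sum_{r=1}^{m}\gamma_r\big[(\av\bv)^*_r - (\av\bv)^*_{-r}\big].
\end{align}
To finish, I need to show the right-hand side equals $-2\Phi_{\av\bv}$, i.e.\ that the "tail" terms $r=m+1,\dots,p$ contribute zero to $\Phi_{\av\bv}$: $\sum_{r=m+1}^{p}\gamma_r\big[(\av\bv)^*_r-(\av\bv)^*_{-r}\big]=0$. This is where the hypotheses $\av\in A_\ell$ and $\bv\in A_{\ell'}$ with $\ell\le\ell'$ enter: for $r>m=p-\ell+1$, i.e.\ $r\ge p-\ell+2$, the membership condition for $A_\ell$ forces $a_{-r}=a_r$, and the condition for $A_{\ell'}$ with $\ell'\ge\ell$ forces $b_{-r}=b_r$ for that same range of $r$ (since $A_{\ell'}$ requires agreement $b_{-r}=b_r$ for at least as many $r$'s as $A_\ell$ does). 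Hence $(\av\bv)_r=(\av\bv)_{-r}$ for all $r>m$, and since the starred entries $(\cv)^*_{\pm r}$ for $r>m$ depend only on coordinates $c_{\pm s}$ with $s\ge r>m$, we get $(\av\bv)^*_r=(\av\bv)^*_{-r}$ for $r>m$, killing the tail. Combining, $\Phi_{\bar\av\bv}=-\Phi_{\av\bv}$, so in particular $\Phi_{\bar\av\bv}^2=\Phi_{\av\bv}^2$ and $\av$ plays well with $\bv$.

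The main obstacle is getting the index ranges exactly right --- which $r$'s are forced to satisfy $a_{-r}=a_r$ and $b_{-r}=b_r$, and confirming that the "$\ell\le\ell'$" ordering is precisely what guarantees $\bv$'s agreement range covers the range where the star-tail needs to vanish. I would handle this carefully by writing out the defining constraints of $A_\ell$ and $A_{\ell'}$ in the $a_{-r}$ vs $a_r$ form, noting that $A_{\ell}$ (for $\ell\le p$) asserts $a_{-r}=a_r$ for $r>m$ and $a_{-m}=-a_m$ with $m=p-\ell+1$, while $A_{p+1}$ asserts agreement for all $r$; then $\ell\le\ell'$ translates to the agreement set $\{r: b_{-r}=b_r\}\supseteq\{r: a_{-r}=a_r\}$, which is exactly the containment needed. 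One edge case worth a sentence is $\ell=\ell'$, where $\bv$ might additionally have $b_{-m}=-b_m$, but this only affects the $r=m$ term, which is already inside the flipped block and causes no trouble. I would also remark that the special case $\ell=\ell'$ with $\bv=\av$ recovers the earlier stated identity $\Phi_{\av\bar\av}=0$ as a consistency check.
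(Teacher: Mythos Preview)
Your proposal is correct and follows essentially the same approach as the paper: both establish the stronger fact $\Phi_{\bar\av\bv}=-\Phi_{\av\bv}$ by observing that under the bar operation the terms $(\av\bv)^*_{\pm r}$ with $r\le m=p-\ell+1$ flip sign while the ``tail'' terms with $r>m$ already vanish because the membership constraints for $A_\ell$ and $A_{\ell'}$ (with $\ell\le\ell'$) force $(\av\bv)_r=(\av\bv)_{-r}$ there. The paper presents this via worked examples at $p=3$ and asserts the general case is immediate; you carry out the general index bookkeeping directly, which is cleaner but not a different idea.
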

\begin{proof}
We show how to prove for the case of $p=3$, and the generalization to other $p$ is immediate.
First, let us show that all $\av \in A_1$ plays well with all $\bv$. Using $p=3$ as an example, then $\av$ and $\bv$ are of the form
\begin{equation}
\begin{array}{ccccccccc}
\av &= & (a_1, &a_2, &a_3,   &-a_{3}, &a_{-2}, &a_{-1}), \\
\bv &= & (b_1,  &b_2, &b_3,  &b_{-3},  &b_{-2}, &b_{-1}).
\end{array}
\end{equation}
Then $\Phi_{\av\bv}$ depends on the following
\begin{align}
(\av\bv)^* = (a_1 a_2 a_3 b_1 b_2 b_3, \quad
			  a_2 a_3 b_2 b_3, \quad
			  a_3 b_3, \quad
			  -a_3 b_{-3}, \quad
			  -a_{-2} a_3 b_{-2} b_{-3}, \quad
			  -a_{-1} a_{-2} a_3   b_{-1} b_{-2} b_{-3}).
\end{align}
Note every term is proportional to $a_3$. Taking $\av$ to $\bar\av$ flips the sign of $a_3$,
so $\Phi_{\bar\av\bv} = -\Phi_{\av\bv}$ and $\Phi_{\bar\av\bv}^2 = \Phi_{\av\bv}^2$.
We see that indeed all $\av\in A_1$ plays well with all $\bv\in A$.

Now consider $\av\in A_2$ and $\bv\in A_3$, where
\begin{equation}
\begin{array}{ccccccccc}
\av &= & (a_1, &a_2, &a_3,   &a_{3}, &-a_{2}, &a_{-1}), \\
\bv &= & (b_1,  &b_2, &b_3,  &b_{3},  &b_{2}, &-b_{1}).
\end{array}
\end{equation}
So
\begin{align}
(\av\bv)^* = (a_1 a_2 a_3 b_1 b_2 b_3, \quad
			  a_2 a_3 b_2b_3, \quad
			  a_3 b_3, \quad
			  a_3 b_{3}, \quad
			  -a_{2} a_3 b_{2} b_{3}, \quad
			  a_{-1} a_{2} a_3  b_{1} b_{2} b_{3}).
\end{align}
Now the coefficients of $\gamma_3$ in $\Phi_{\av\bv}$ is 0, and all other terms flip sign when $a_2\to -a_2$, so again we have  $\Phi_{\bar\av\bv}^2 = \Phi_{\av\bv}^2$ when $\av\in A_2$ and $\bv\in A_3$.
The other cases and generalization to other $p$ are immediate.
\end{proof}

Return to our task of showing $1=1$ as in \eqref{eq:1=1}.
Pick one $\dv\in A_1$ and let $A^c = A\setminus\{\dv,\bar{\dv}\}$.
We can write the RHS of \eqref{eq:1=1} as
\begin{align}
\sum_{t_d=0}^n
~ \sum_{n_\dv+ n_{\bar\dv}=t_d}
  \sum_{\{n_\av:~ \av\in A^c\}} 
&\binom{n}{t_d} \binom{t_d}{n_\dv, n_{\bar\dv}} \binom{n-t_d}{\{n_{\av}\}}
\exp\Big[{-\frac{1}{4n}\sum_{\av,\bv\in A^c} \Phi_{\av\bv}^2 n_\av n_\bv}\Big]\prod_{\av\in A^c} Q_\av^{n_\av} \nonumber \\
&  \exp\Big[{-\frac{1}{2n}\sum_{\bv\in A^c} \Phi^2_{\dv\bv} n_\dv n_\bv}\Big]
\exp\Big[{-\frac{1}{2n}\sum_{\bv\in A^c} \Phi^2_{\bar\dv\bv} n_{\bar\dv} n_\bv}\Big]
Q_{\dv}^{n_{\dv}} Q_{\bar\dv}^{n_{\bar\dv}}.
\label{eq:1=1-d-example}
\end{align}
Now note the identity
\begin{align} \label{eq:binom-ident}
\sum_{n_{\dv} + n_{\bar\dv}=t_d} \binom{t_d}{n_{\dv}, n_{\bar\dv}} x^{n_{\dv}} y^{n_{\bar\dv}} = (x+y)^{t_d}.
\end{align}
If $y=-x$ this is only non-zero (and equal to 1) if $t_d=0$, in which case $n_\dv = n_{\bar\dv}=0$.
And since $\dv\in A_1$ plays well with all $\bv \in A^c$ and $Q_{\bar\dv} = -Q_\dv$, we have that $y=-x$, so in fact $n_\dv = n_{\bar\dv}=0$ in \eqref{eq:1=1-d-example}.

This argument applies to all $\dv \in A_1$. So we have $n_{\dv}=0$ for all $\dv\in A_1$. Now take $\dv\in A_2$. This $\dv$ plays well with all the remaining $\bv$'s which are at levels 2 through $p+1$.
So now all $\dv\in A_2$ have $n_{\dv}=0$.
Continuing in this fashion, the RHS of \eqref{eq:1=1} becomes
\begin{align}
\sum_{\{n_\av:~\av\in A_{p+1}\}} \binom{n}{\{n_\av\}}\exp\Big[-\frac{1}{4n}\sum_{\av,\bv\in A_{p+1}} \Phi_{\av\bv}^2 n_\av n_\bv\Big]\prod_{\av\in A_{p+1}} Q_\av^{n_\av}.
\end{align}
But consider $\Phi_{\av\bv}$ with any $\av,\bv\in A_{p+1}$, 
\begin{equation}
\begin{array}{lclllrrl}
\av &= & (a_1, &a_2, &a_3,   &a_{3}, &a_{2}, &a_{1}), \\
\bv &= & (b_1,  &b_2, &b_3,  &b_{3},  &b_{2}, &b_{1}),
\end{array}
\end{equation}
so $(\av\bv)_r^* = (\av\bv)_{-r}^*$, which gives $\Phi_{\av\bv}=0$ as can be seen from its definition from \eqref{eq:Phidef}.
So now the RHS of \eqref{eq:1=1} becomes
\begin{align}
\sum_{\{n_\av:~\av\in A_{p+1}\}} \binom{n}{\{n_\av\}}
	\prod_{\av\in A_{p+1}} Q_\av^{n_\av}  = \bigg[\sum_{\av\in A_{p+1}} Q_{\av}\bigg]^n.
\end{align}
Return to the form of $Q_\av$ given in \eqref{eq:Qdef}.
For $\av\in A_{p+1}$ where $a_{-j}= a_j$ for all $1\le j \le p$, we have
\begin{align}
Q_\av = \prod_{j=1}^p (\cos^2\beta_j)^{(1+a_j)/2} (\sin^2\beta_j)^{(1-a_j)/2}
\end{align}
and
\begin{align} \label{eq:Qsum}
\sum_{\av\in A_{p+1}} Q_{\av} &= \sum_{a_1=\pm1} \cdots \sum_{a_p=\pm1} \prod_{j=1}^p (\cos^2\beta_j)^{(1+a_j)/2} (\sin^2\beta_j)^{(1-a_j)/2} = \prod_{j=1}^p [\cos^2\beta_j + \sin^2\beta_j] = 1,
\end{align}
so happily we have 1 = 1.

\subsection{Evaluating the moments}
We are now going to calculate the first and second moments, and we will organize the sum in the same way as for the zeroth moment, but evaluating the sum will be more complicated.
From \eqref{eq:general-p-config}, the first moment is
\begin{align} \label{eq:M1}
\EVJ{\bgbbraket{C/n}}  
	= \frac{i}{2n^2}\sum_{\{n_\av\}} \binom{n}{\{n_\av\}}
	\bigg(\sum_{\uv,\vv\in A} \Phi_{\uv\vv}n_\uv n_\vv\bigg)
	 \exp\Big[{-\frac{1}{4n}\sum_{\av,\bv\in A} \Phi_{\av\bv}^2 n_\av n_\bv }\Big]   \prod_{\av\in A} Q_\av^{n_\av}.
\end{align}
Similarly, the second moment is
\begin{align} \label{eq:M2}
&\EVJ{\bgbbraket{(C/n)^2}}  \nonumber \\
	&\qquad= \frac{n-1}{2n^2} - \frac{1}{4n^4}\sum_{\{n_\av\}} \binom{n}{\{n_\av\}}
	\bigg(\sum_{\uv,\vv\in A} \Phi_{\uv\vv}n_\uv n_\vv\bigg)^2
	 \exp\Big[{-\frac{1}{4n}\sum_{\av,\bv\in A} \Phi_{\av\bv}^2 n_\av n_\bv }\Big]   \prod_{\av\in A} Q_\av^{n_\av}.
\end{align}
In the limit of $n\to\infty$, the first term drops out and we only care about the second term.
To evaluate these two moments, we first note that
\begin{align} 
\Phi_{\uv\vv} = \sum_{r=1}^p \gamma_r \left[(\uv^*\vv^*)_r - (\uv^*\vv^*)_{-r} \right].
\end{align}
Recall the $*$ operation defined in \eqref{eq:star-op}.
Note each term is non-zero only when $u^*_r u^*_{-r} = -v^*_r v^*_{-r}$. We then write a conveniently factorized form
\begin{align}\label{eq:factorized}
\sum_{\uv,\vv\in A} \Phi_{\uv\vv} n_\uv n_\vv = \sum_{r=1}^p \sum_{\uv,\vv\in A} \gamma_r (u^*_r + u^*_{-r}) (v^*_r-v^*_{-r}) n_\uv n_\vv.
\end{align}
Therefore, the hard part of the calculation is to evaluate sums of the form
\begin{align} \label{eq:Suv-def}
\Suv =  \sum_{\{n_\av\}} \binom{n}{\{n_\av\}}
	 \exp\Big[{-\frac{1}{4n}\sum_{\av,\bv\in A} \Phi_{\av\bv}^2 n_\av n_\bv }\Big]  \Big(\prod_{\av\in A} Q_\av^{n_\av} \Big) \frac{n_\uv n_\vv}{n^2} \\
\Suvxy =  \sum_{\{n_\av\}} \binom{n}{\{n_\av\}}
	 \exp\Big[{-\frac{1}{4n}\sum_{\av,\bv\in A} \Phi_{\av\bv}^2 n_\av n_\bv }\Big]  \Big(\prod_{\av\in A} Q_\av^{n_\av} \Big) \frac{n_\uv n_\vv n_\xv n_\yv}{n^4}
	 \label{eq:Suvxy-def}
\end{align}
for a fixed $\uv, \vv, \xv, \yv$.
To that end,  we will show the following Lemma:
\begin{lemma}\label{lem:W}
There are a set of functions $\{W_\uv(\vparam): \uv\in A\}$ such that
\begin{align} \label{eq:Suv}
\lim_{n\to\infty} 
	 \Suv = W_\uv W_\vv
\end{align}
and
\begin{align} \label{eq:Suvxy}
\lim_{n\to\infty} 
	 \Suvxy = W_\uv W_\vv W_\xv W_\yv \,.
\end{align}
\end{lemma}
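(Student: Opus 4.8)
The plan is to evaluate $\Suv$ in \eqref{eq:Suv-def} and $\Suvxy$ in \eqref{eq:Suvxy-def} by running the exact bookkeeping that proved $1=1$ in \eqref{eq:1=1}, now carrying along the ``tags'' $n_\uv n_\vv/n^2$, resp.\ $n_\uv n_\vv n_\xv n_\yv/n^4$; these are precisely the sums that control the moments \eqref{eq:M1}--\eqref{eq:M2} through the factorization \eqref{eq:factorized}. Organize the sum over $\{n_\av\}$ by the decomposition $A=A_1\cup\cdots\cup A_{p+1}$ of \eqref{eq:Apart-def} and the bar-involution \eqref{eq:bar-op}, and reduce the levels $\ell=1,2,\ldots,p$ in order. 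Recall $Q_{\bar\av}=-Q_\av$, $\Phi_{\av\bar\av}=0$, and that by Lemma~\ref{lem:ordering} every $\av\in A_\ell$ plays well with every $\bv$ at level $\ge\ell$, so the bulk part of $\exp[-\tfrac1{4n}\sum\Phi_{\av\bv}^2 n_\av n_\bv]$ depends on $n_\av,n_{\bar\av}$ only through $t=n_\av+n_{\bar\av}$. For a pair $\{\av,\bar\av\}$ with $\av\in A_\ell$, $\ell\le p$: if neither element carries a tag, identity \eqref{eq:binom-ident} with $y=-x$ forces $t=0$; if $\av$ carries one tag, the weighted identity $\sum_{p+q=s}\binom{s}{p,q}p\,x^p y^q=s\,x(x+y)^{s-1}$ with $y=-x$ forces $t=1$, pinning $n_\av=1$ and emitting a factor $\binom n1=n$, a weight $Q_\av$, and a ``floating factor'' $\exp[-\tfrac1{2n}\sum_\bv\Phi_{\av\bv}^2 n_\bv]$ recording $\av$'s interaction with the remaining configurations; the case where both $\av$ and $\bar\av$ are tagged is handled by the analogous $p^2$-weighted identity and again pins the counts to $O(1)$. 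Note that even if $\av$ does not play well with an already-pinned lower-level tagged configuration $\wv$, the surviving term is still $\alpha=Q_\av e^{-\Phi_{\wv\av}^2/2n}\to Q_\av$ at $t=1$, so the pinning is unaffected.

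After levels $1,\ldots,p$ are reduced, only $A_{p+1}$ carries macroscopic counts, where $\Phi_{\av\bv}=0$ so the residual exponential is $1$, and by \eqref{eq:Qsum} the vector $\{Q_\av:\av\in A_{p+1}\}$ is an honest probability distribution ($Q_\av\ge0$, $\sum Q_\av=1$). Hence the residual sum over $\{n_\av:\av\in A_{p+1}\}$, with the appropriate $A_{p+1}$-tags and the accumulated floating factors, is a genuine multinomial expectation on $N=n-O(1)$ draws, so $n_\av/n\to Q_\av$ by the law of large numbers. This (i) evaluates any tag on an $A_{p+1}$-configuration, $n_\uv/n\to Q_\uv$, and (ii) passes to the limit in each floating factor, $\exp[-\tfrac1{2n}\sum_\bv\Phi_{\uv\bv}^2 n_\bv]\to\exp[-\tfrac12\sum_{\bv\in A_{p+1}}\Phi_{\uv\bv}^2 Q_\bv]$ (the pinned low-level configurations contribute only $O(1/n)$ to that sum, and these weights have modulus $\le1$, so dominated convergence applies). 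Counting powers of $n$ — each pinned low-level tag supplies one $n$ from its $\binom n1$, each $A_{p+1}$-tag supplies one $n$ from $N(N-1)\cdots$ — they exactly cancel the $n^{-2}$, resp.\ $n^{-4}$, prefactor. The reduction thus exhibits the single building block
\[
W_\uv(\vparam)\;=\;Q_\uv\,\exp\!\Big[-\tfrac12\sum_{\bv\in A_{p+1}}\Phi_{\uv\bv}^2\,Q_\bv\Big],
\]
and gives $\lim_n\Suv=W_\uv W_\vv$ and $\lim_n\Suvxy=W_\uv W_\vv W_\xv W_\yv$. The factorization is forced because distinct tagged configurations interact only through cross-terms such as $\tfrac1n\Phi_{\uv\vv}^2 n_\uv n_\vv$, which are $O(1/n)$ once the counts are pinned; coincidences or bar-relations among $\uv,\vv,\xv,\yv$ (so a single pair carries $n_\uv^2$, etc.) produce only $O(1/n)$ corrections via identities like $n_\uv^2=n_\uv(n_\uv-1)+n_\uv$, leaving the product formula intact.

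The main obstacle is making the level reduction at levels $\ell\ge2$ rigorous: there the floating factors emitted at lower levels multiply the weights $Q_\av$ and spoil the exact cancellation $Q_\av+Q_{\bar\av}=0$, leaving a residue $Q_\av\big(e^{-\Phi^2_{\cdot\av}/2n}-e^{-\Phi^2_{\cdot\bar\av}/2n}\big)=O(1/n)$ per unpinned pair, so the reduction is only exact up to $O(1/n)$. One must show that the \emph{total} contribution of all configurations whose counts exceed their pinned values vanishes as $n\to\infty$, and the naive triangle-inequality bound $(\sum_\av|Q_\av|)^n$ is exponentially large, so the estimate must exploit the residual cancellation — for instance by applying the exact (level-$1$-type) reduction recursively on the shrinking configuration spaces $A\setminus A_1$, $A\setminus(A_1\cup A_2),\ldots$, with floating weights absorbed into modified $Q$'s, or by a direct Laplace estimate using that those modified weights still have modulus $\le1$ on $A_{p+1}$. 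The $A_{p+1}$ step itself — uniform integrability so the limit passes inside the multinomial expectation — is then routine.
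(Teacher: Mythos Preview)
There is a genuine gap, and it shows up in your final formula: your $W_\uv=Q_\uv\exp\bigl[-\tfrac12\sum_{\bv\in A_{p+1}}\Phi_{\uv\bv}^2 Q_\bv\bigr]$ is $Q_\uv F_\uv=X_\uv$ in the paper's notation, but the correct value is $W_\uv=X_\uv Y_\uv$ with an additional factor $Y_\uv=\exp\bigl[\sum_j R_j K_{j,\uv}\bigr]$ determined by an iterative recursion. Already at $p=2$ one has $Y_{++--}=e^{i\Lambda}$ with $\Lambda=4\gamma_1\gamma_2\sin2\beta_1\,e^{-2\gamma_1^2}\neq0$, so your expression for $V_2(\vparam)$ would be wrong.

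The error is the claim that untagged pairs at levels $2,\ldots,p$ get pinned to $t=0$. The $O(1/n)$ residue you flag in your last paragraph is not a correction to be bounded away---it is the leading term. When an untagged pair $\{\dv,\bar\dv\}$ at level $\ell'$ is reduced in the presence of a pinned tag $\uv$ at level $\ell<\ell'$, the inner binomial sum gives $\bigl(Q_\dv g_{\uv\dv}+Q_{\bar\dv}g_{\uv\bar\dv}\bigr)^{t_\dv}$; since $\dv\in A_{\ell'}$ need not play well with $\uv\in A_\ell$ (Lemma~\ref{lem:ordering} only guarantees the reverse direction), this is $\bigl(c_\dv/n+O(1/n^2)\bigr)^{t_\dv}$ with $c_\dv\neq0$ in general. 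Multiplied by the combinatorial factor $n^{t_\dv}/t_\dv!$, each $t_\dv\ge0$ contributes $c_\dv^{t_\dv}/t_\dv!$, and the sum over $t_\dv$ is $e^{c_\dv}$, not $\delta_{t_\dv,0}$. Worse, once $t_\dv>0$ survives, it feeds back into the reduction of still-higher-level pairs, so the $c$'s themselves depend on one another. Your suggested fix of ``recursive reduction with floating weights absorbed into modified $Q$'s'' is in fact the right mechanism, but it does not show the extra contributions vanish---it \emph{computes} them, and they assemble into the missing $Y_\uv$. The paper carries this out by first summing out $A_{p+1}$ to produce the $F_\bv$ factors, then summing each $B$-pair over all $t_j\ge0$ (not pinning), with the $O(1/n)$ residues encoded as $K_{j,i}=\tfrac12 X_j(\Phi_{\bar j\,i}^2-\Phi_{j\,i}^2)$ and the feedback among pairs handled by an explicit iteration for the coefficients $R_j$.
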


\noindent
Combining \eqref{eq:M1} with \eqref{eq:factorized} and  \eqref{eq:Suv}, we have
\begin{align} \label{eq:Vp}
V_p(\vparam) = \lim_{n\to\infty} \EVJ{\bgbbraket{C/n}}  = \frac{i}{2} \sum_{r=1}^p \sum_{\uv,\vv\in A} \gamma_r (u^*_r + u^*_{-r}) (v^*_r-v^*_{-r}) W_\uv W_\vv
\end{align}
which is the expected value of the energy of the QAOA applied to typical instance of SK model in the infinite size limit,
in terms of the yet-to-be-determined $W$'s that we will soon show how to compute.
Combining \eqref{eq:M2} with  \eqref{eq:factorized} and \eqref{eq:Suvxy} gives
\begin{align} \label{eq:concentration}
&\lim_{n\to\infty} \EVJ{\bgbbraket{(C/n)^2}} \nonumber \\
&\qquad \qquad= -\frac{1}{4} \sum_{r,s=1}^p \sum_{\uv,\vv,\xv,\yv\in A} \gamma_r \gamma_s (u^*_r + u^*_{-r}) (v^*_r-v^*_{-r}) (x^*_s + x^*_{-s}) (y^*_s-y^*_{-s}) W_\uv W_\vv W_\xv W_\yv \nonumber \\
&\qquad \qquad= [V_p(\vparam)]^2 . % = \lim_{n\to\infty} \EV_J^2[\bgbbraket{C/n}].
\end{align}
This proves our main Theorem in Section~\ref{sec:results}.
%Applying the arguments in Section~\ref{sec:concentration} shows both concentration over $J$ and concentration over measurements for fixed $J$ for the case of general $p$.

%The reader who is not interested in more details can skip to Section~\ref{sec:implementation} where \eqref{eq:Vp} is repeated as \eqref{eq:Vp-box}. There we present a self-contained procedure for evaluating \eqref{eq:Vp} for general $p$.

\paragraph{Proof of Lemma~\ref{lem:W} -- First Moment}---
To begin we show \eqref{eq:Suv}.
Note in evaluating $\Suv$ for the purpose of calculating the first moment, we see from \eqref{eq:factorized} that we need not consider the case where $\uv=\vv$ or $\uv=\bar{\vv}$.
Even though \eqref{eq:Suv} can be shown to be true for all choice of $\uv,\vv$, here we will only prove it for the cases we need. The other cases are simple extensions of the arguments presented here.
Let us denote
\begin{align} \label{eq:gdef}		
g_{\av\bv} = \exp\Big[{-\frac{1}{2n} \Phi^2_{\av\bv} }\Big]
\end{align}
to keep some of the formulas a bit more compact.
We write \eqref{eq:Suv-def} as
\begin{align} \label{eq:Suv-w-g}
\Suv = \sum_{\{n_\av\}} \binom{n}{\{n_\av\}}
	 \prod_{\av,\bv\in A}g_{\av\bv}^{ n_\av n_\bv/2 }  \prod_{\av\in A} Q_\av^{n_\av} \frac{n_\uv n_\vv}{n^2}~.
\end{align}

We will organize \eqref{eq:Suv-w-g} by summing over $\{n_\av: \av \in A_{p+1}\}$ first.
For these $\av$'s, the $Q_{\av}$'s have trigonometric factors but no $i$'s. When showing that $1=1$ these were the only non-zero $n_\av$'s, so they added to $n$.
The remaining configurations are in the set we denote as
\begin{equation}
B \equiv A\setminus A_{p+1}
\end{equation}
which is everything in $A$ that is not in $A_{p+1}$.
Now let
\begin{align}
\sum_{\av\in A_{p+1}} n_\av = n-t
\qquad
\text{and}
\qquad
\sum_{\bv\in B} n_\bv = t.
\end{align}
We can write
\begin{align} \label{eq:Suv-suv}
\Suv = \sum_{t=0}^n \suv(t,n)
\end{align}
where
\begin{align}  \label{eq:Suv-with-I}
\suv(t,n) = \binom{n}{t} 
&\sum_{\{n_\bv:~ \bv\in \Ac\}}\binom{t}{\{n_\bv\}} 
\prod_{\bv, \bv'\in \Ac} g_{\bv\bv'}^{ n_\bv n_{\bv'}/2} \prod_{\bv \in \Ac} Q_\bv^{n_\bv} \nonumber \\
\times
& \underbrace{\sum_{\{n_\av:~ \av\in A_{p+1}\}} \binom{n-t}{\{n_\av\}}
\prod_{\av\in A_{p+1},\bv\in \Ac} g_{\av\bv}^{ n_\av n_\bv} 
	 \prod_{\av \in A_{p+1}} Q_\av^{n_\av}  }_{\I}  ~ \frac{n_\uv n_\vv}{n^2}
\end{align}
where we used that $\Phi^2_{\av\av'}=0$ if $\av,\av'$ are both in $A_{p+1}$.

To evaluate $\Suv$ and control our approximation as $n\to\infty$, our strategy is to examine the properties of its summand $\suv$.
We will first show that for fixed $T$, independent of $n$, and $t\le T$, 
\begin{align} \label{eq:luv-approx}
\suv(t,n) = \luv(t) \times (1+O(1/n))
\end{align}
for some $\luv$ that depends on $t$ and not $n$.
This will lead to our goal which is \eqref{eq:Suv=luv}.
To this end, we need to show that there exists a sequence $\{\buv(t)\}_{t=0}^\infty$, which depends on $t$ but not $n$ that bounds $|\suv(t,n)|$ via
\begin{align} \label{eq:buvdef}
|\suv(t,n)| \le \buv(t), \quad \text{for all} \quad t\le n,
\end{align}
and satisfies 
\begin{align} \label{eq:buv-converge} 
\sum_{t=0}^\infty \buv(t) <\infty
\,.
\end{align}
Once we have this sequence $\buv$ satisfying \eqref{eq:buvdef} and \eqref{eq:buv-converge}, then
% of \eqref{eq:buvdef}
\begin{align} \label{eq:Suv=luv-ineq}
\bigg|\Suv - \sum_{t=0}^T \luv(t)\bigg| \le \sum_{t=0}^T \bigg|\suv(t,n) - \luv(t)\bigg| + \sum_{t=T+1}^n \buv(t)\,.
\end{align}
By taking the limit as $n\to\infty$, we see that the first term of the RHS vanishes because of \eqref{eq:luv-approx}, yielding
\begin{align}
\lim_{n\to\infty}\bigg|\Suv - \sum_{t=0}^T \luv(t)\bigg| \le \sum_{t=T+1}^\infty \buv(t)\,.
\end{align}
 Then taking the limit as $T\to\infty$, we see from \eqref{eq:buv-converge} that
\begin{align} \label{eq:Suv=luv}
\lim_{n\to\infty} \Suv = \sum_{t=0}^\infty \luv(t)~.
\end{align}
The last remaining step is to show that the RHS of \eqref{eq:Suv=luv} evaluates to $W_\uv W_\vv$ as asserted in \eqref{eq:Suv} in Lemma~\ref{lem:W}.

We will proceed in two cases: (1) neither $\uv$ nor $\vv$ is in $A_{p+1}$; or (2) at least one of $\uv,\vv$ is in $A_{p+1}$. 
We first focus on finding $\luv(t)$ in \eqref{eq:luv-approx} which we need for \eqref{eq:Suv=luv}. 
We will later show \eqref{eq:buvdef} and \eqref{eq:buv-converge} which are needed to prove \eqref{eq:Suv=luv} is true.

\paragraph{Case 1:} Suppose $\uv, \vv$ are both not in $A_{p+1}$.
Using the multinomial generalization of \eqref{eq:binom-ident}, we can sum over $\{n_\av: \av \in A_{p+1}\}$ in \eqref{eq:Suv-with-I} with $t$ fixed to get
\begin{align}  \label{eq:Idef}
\I = \left[\sum_{\av \in A_{p+1}} Q_\av \prod_{\bv\in \Ac} g_{\av\bv}^{n_\bv}\right]^{n-t}.
\end{align}
Since $\bv \in \Ac$, we have that $n_\bv \le t$.
For $t\le T$ with $T$ fixed, and for large $n$, we can expand $g_{\av\bv}$ in \eqref{eq:gdef} to leading order in $1/n$, giving
\begin{align} \label{eq:I}
\I &= \left[\sum_{\av \in A_{p+1}} Q_\av \Big({1- \frac{1}{2n}\sum_{\bv\in \Ac} \Phi_{\av\bv}^2 n_\bv} \Big) + O(\frac{1}{n^2})\right]^{n-t}.
\end{align}
Using the fact that $\sum_{\av\in A_{p+1}} Q_\av = 1$ from \eqref{eq:Qsum}, we have
\begin{align}
\I &=\left[1 - \frac{1}{2n}\sum_{\av \in A_{p+1}} Q_\av\sum_{\bv\in \Ac} \Phi_{\av\bv}^2 n_\bv + O(\frac{1}{n^2}) \right]^{n-t} ~.
\end{align}
For large $n$  we have
\begin{align}
\I &= \exp\left[ - \frac12\sum_{\av \in A_{p+1}} Q_\av \sum_{\bv\in \Ac} \Phi_{\av\bv}^2 n_\bv\right]\times (1 + O(1/n)) ~.
\end{align}
Let us define
\begin{align} \label{eq:Fdef}
F_\bv = \exp\left[-\frac{1}{2}\sum_{\av \in A_{p+1}} Q_\av \Phi_{\av\bv}^2\right].
\end{align}
then
\begin{align}
\I &= \prod_{\bv\in \Ac} F_\bv^{n_\bv} \times \big(1 + O(1/n)\big) ~.
\label{eq:I-form}
\end{align}
Note that since any $\bv \in \Ac = A$\,$\setminus$\,$A_{p+1}$ plays well with any $\av\in A_{p+1}$, we have that
\begin{equation}
F_\bv = F_{\bar\bv} ~.
\end{equation}
Returning to \eqref{eq:Suv-with-I}, we have for $t\le T$
\begin{align}
\label{eq:Suv-simple}
\suv(t,n) = \binom{n}{t}  \sum_{\{n_\bv:~ \bv\in \Ac\}}
\binom{t}{\{n_\bv\}}
 \prod_{\bv, \bv'\in \Ac} g_{\bv\bv'}^{n_\bv n_{\bv'}/2}
 	  \prod_{\bv \in \Ac} X_\bv^{n_\bv}   \Big(\frac{n_\uv n_\vv}{n^2}\Big) \big(1 + O (1/n)\big)
\end{align}
where we defined
\begin{align} \label{eq:Xdef}
X_\bv = Q_\bv F_\bv.
\end{align}
Note $X_{\bar\bv} = -X_\bv$ for $\bv\in \Ac$. Evaluating \eqref{eq:Suv-simple} will allow us to obtain an approximate form of $\suv(t,n)$ as $\luv(t)$.

In order to  establish the patterns  that make this calculation doable, let
us pretend that $\Ac$ has 8 elements and write
\begin{equation}
\Ac = \{1, \bar1, 2, \bar2, 3, \bar3, 4, \bar4\}.
\end{equation}
Because of Lemma~\ref{lem:ordering}, we can assume without loss of generality that the lower numbered elements play well with the higher (e.g., $2$ and $\bar2$ play well with $3,\bar3, 4, \bar4$ but not necessarily with $1,\bar1$, etc.).
We are not assuming anything about the levels these elements come from other than that they are consistent with the playing-well ordering.
Now, between 1 and 2 we have
\begin{align}
g_{12}^{n_1 n_2} g_{\bar1 2}^{n_{\bar1} n_2} g_{1\bar2}^{n_1 n_{\bar2}} g_{\bar1 \bar2}^{n_{\bar1} n_{\bar2}}.
\end{align}
But 1 plays well with both 2 and $\bar2$, so $g_{12} = g_{\bar1 2}$ and $g_{1\bar2} = g_{\bar1 \bar2}$, so this factor is
\begin{align}
(g_{12}^{n_2} g_{1\bar2}^{n_{\bar2}})^{n_1 + n_{\bar1}}.
\end{align}
Let us also write
\begin{align}
t_j = n_j + n_{\bar j} ~.
\end{align}
Then \eqref{eq:Suv-simple} becomes
\begin{align} \label{eq:Suv-example}
\suv(t,n) = \sum_{t_1+t_2+t_3+t_4=t}\binom{n}{t} \binom{t}{t_1, t_2, t_3, t_4}
 \sum_{\{n_j + n_{\bar{j}} = t_j\}}
&\binom{t_1}{n_1, n_{\bar1}} \gf{1}{2} \gf{1}{3} \gf{1}{4} \Xf{1} \nonumber\\
\times&\binom{t_2}{n_2, n_{\bar2}} \gf{2}{3} \gf{2}{4} \Xf{2} \nonumber\\
\times&\binom{t_3}{n_3, n_{\bar3}} \gf{3}{4} \Xf{3} \nonumber \\
\times&\binom{t_4}{n_4, n_{\bar4}} \Xf{4} \nonumber\\
\times& \Big(\frac{n_\uv}{n}\Big)\Big( \frac{n_\vv}{n}\Big)\times \Big(1+ O\Big(\frac{1}{n}\Big)\Big).
\end{align}

Now, for given $t_j$, we will sum over $n_j$ and $n_{\bar j}$.
If $j$ or $\bar{j}$ are not $\uv$ or $\vv$, these sums are of the form
\begin{align} \label{eq:binom-0down}
\sum_{n_j + n_{\bar j} = t_j} \binom{t_j}{n_j n_{\bar{j}}} D_j^{n_j} D_{\bar j}^{n_{\bar j}} = (D_j + D_{\bar j})^{t_j}.
\end{align}
But we also need to consider the cases where $j$ or $\bar{j}$ can be $\uv$ or $\vv$.
As mentioned earlier, we don't need to consider cases where $\uv=\vv$ or $\uv=\bar\vv$ when evaluating $\Suv$.
Then we have sums of the form
\begin{align} 
\label{eq:binom-1down}
\sum_{n_j + n_{\bar j} = t_j} n_j \binom{t_j}{n_j n_{\bar{j}}} 
D_j^{n_j} D_{\bar j}^{n_{\bar j}} &= t_j D_j (D_j + D_{\bar j})^{t_j-1},
\end{align}
and
\begin{align}
\sum_{n_j + n_{\bar j} = t_j} n_{\bar{j}} \binom{t_j}{n_j n_{\bar{j}}} D_j^{n_j} D_{\bar j}^{n_{\bar j}} &= t_j D_{\bar{j}} (D_j + D_{\bar j})^{t_j-1}.
\end{align}

\subparagraph{Example summing on $n_3, n_{\bar3}$ where $\uv,\vv\neq 3,\bar3$}---
As an example, let us do the sum on $n_3$ and $n_{\bar3}$ in \eqref{eq:Suv-example}, first assuming that $3$ and $\bar{3}$ are not equal to $\uv$ or $\vv$.
Using \eqref{eq:binom-0down}, this sum gives
\begin{align} \label{eq:Xg3-example}
\Big(X_3 \gt{1}{3} \gt{2}{3} + X_{\bar3} \gt{1}{\bar3} \gt{2}{\bar3}\Big)^{t_3}.
\end{align}
Now, for $t\le T$ with $T$ fixed and $n$ large, we have
\begin{align}
X_3 \gt{1}{3} \gt{2}{3} + X_{\bar3} \gt{1}{\bar3} \gt{2}{\bar3}
&=  X_3\exp\left[-\frac{1}{2n} (\Phi^2_{13} t_1 + \Phi^2_{23} t_2)\right]
	- X_3 \exp\left[-\frac{1}{2n} (\Phi^2_{1\bar3} t_1 + \Phi^2_{2\bar3} t_2)\right] \nonumber \\
& = \frac{1}{2n}\left[X_3(\Phi^2_{\bar31} - \Phi^2_{31}) t_1 + X_3 (\Phi^2_{\bar32} - \Phi^2_{32}) t_2\right] + O(1/n^2) \nonumber \\
& = \frac{1}{n}\left(K_{3,1} t_1 + K_{3,2} t_2\right) + O(1/n^2)
\label{eq:K3example}
\end{align}
where we defined
\begin{align} \label{eq:Kdef}
K_{\bv,\cv} = \frac12 X_\bv (\Phi^2_{\bar\bv \cv} - \Phi^2_{\bv \cv}) ~.
\end{align}
Now the combinatorial factor in \eqref{eq:Suv-example} is
\begin{align}
\binom{n}{t} \binom{t}{t_1, t_2, t_3, t_4}  = \frac{n!}{(n-t)!} \frac{1}{t_1! t_2!t_3!t_4!} ~.
\end{align}
For $t\le T$ with $T$ fixed and $n$ large, we have $n!/(n-t)!=n^t (1+O(1/n))$, and so
\begin{align}
\binom{n}{t} \binom{t}{t_1, t_2, t_3, t_4}
= \frac{n^{t_1}}{t_1!} \frac{n^{t_2}}{t_2!} \frac{n^{t_3}}{t_3!} \frac{n^{t_4}}{t_4!} \times \big(1 + O(1/n)\big).
\end{align}
The factor of $n^{t_3}/t_3!$ combines with \eqref{eq:Xg3-example} and \eqref{eq:K3example} to give 
\begin{align} \label{eq:Kt-example}
\frac{1}{t_3!} (K_{3,1} t_1 + K_{3,2} t_2)^{t_3} \times (1+O(1/n)).
\end{align}

So far, we have summed on $n_3$ and $n_{\bar3}$ with $t_3=n_3 + n_{\bar3}$, but the formula given in \eqref{eq:Kt-example} has a 3 but no $\bar{3}$.
This is because
\begin{align} \label{eq:K-identity1}
K_{\bar{\bv},\cv} = K_{\bv,\cv}
\end{align}
as can be seen as follows: $K_{\bv,\cv} =\frac12 X_{\bv} (\Phi^2_{\bar\bv,\cv} - \Phi^2_{\bv,\cv})$ as defined in \eqref{eq:Kdef}, but $X_{\bar\bv} = - X_\bv$, so $K_{\bar\bv,\cv}=K_{\bv,\cv}$. We also have that
\begin{align} \label{eq:K-identity2}
K_{\bv,\bar\cv} = K_{\bv, \cv} ~.
\end{align}
To see this, note that $K_{\bv,\cv}$ is not zero only if $\bv$ does not play well with $\cv$, and thus $\bv$ is after $\cv$ in the play-well ordering by Lemma~\ref{lem:ordering}. Then $\cv$ must play well with $\bv$ and $\bar\bv$, and thus $\Phi_{\bv\bar\cv}^2 = \Phi_{\bv\cv}^2$ and $\Phi_{\bar\bv\bar\cv}^2 = \Phi_{\bar\bv\cv}^2$, implying \eqref{eq:K-identity2}. 
So in \eqref{eq:Kt-example}, we could replace $1$ with $\bar{1}$, $2$ with $\bar{2}$, or $3$ with $\bar3$.

\subparagraph{Example summing over $n_3, n_{\bar3}$ where $\uv=3$}---
Now, let us see what happens if say $\uv=3$. Including $n_\uv / n$ in the sum over $n_3$ and $n_{\bar3}$ as well as $n^{t_3}/t_3!$ from the combinatorial factor, we can use \eqref{eq:binom-1down} and get
\begin{align} \label{eq:sum-n3-u3}
\frac{1}{n}\frac{n^{t_3}}{t_3!} t_3X_3 \gt{1}{3} \gt{2}{3} \Big(X_3 \gt{1}{3} \gt{2}{3} + X_{\bar3} \gt{1}{\bar3} \gt{2}{\bar3}\Big)^{t_3-1}.
\end{align}
For $t\le T$ with $T$ fixed and $n$ large, \eqref{eq:sum-n3-u3} becomes
\begin{align}
\frac{1}{t_3!} t_3 X_3 \gt{1}{3} \gt{2}{3}  (K_{3,1} t_1 + K_{3,2} t_2)^{t_3-1} \times (1+O(1/n)).
\end{align}
Note the factors of $n$ have all cancelled, because we have $n^{t_3}$ from the combinatorial factor, $(1/n)^{t_3-1}$ from \eqref{eq:K3example} raised to the $t_3-1$ power, and $1/n$ from $n_\uv/n$.
Then since $t_1,t_2 \le t$, we have $\gt{1}{3} \gt{2}{3}=1 + O(1/n)$ for large $n$, so we  have
\begin{align}
\frac{1}{t_3!} t_3  X_3  (K_{3,1} t_1 + K_{3,2} t_2)^{t_3-1}\times (1+O(1/n)).
\label{eq:before-shift}
\end{align}
The sum on $t_3$ starts at $t_3=0$, where this term is zero.
The $t_3=1$ term is $X_3$.
We can let $t_3'=t_3-1$ and sum from $t_3'=0$.
The factor \eqref{eq:before-shift} is now
\begin{align}
\frac{1}{t_3'!} X_3 (K_{3,1} t_1 + K_{3,2} t_2)^{t_3'}  \times (1 + O(1/n)).
\end{align}

\subparagraph{Obtaining $\luv$ to get \eqref{eq:Suv=luv}}---
Combining the two examples above, we can see that one can write \eqref{eq:Suv-example} as
\begin{align}
\suv(t,n) &= \luv(t)
	\times \big(1 + O(1/n)\big)
\end{align}
with
\begin{align} \label{eq:luv-first}
\luv(t) = X_\uv X_\vv  \sum_{t'_1+t'_2+t'_3+t'_4=t-2}
	\frac{1}{t'_1!} 0^{t'_1}
	\times \frac{1}{t'_2!} (K_{2,1} t_1)^{t'_2}
	\times \frac{1}{t'_3!} (K_{3,1} t_1 + K_{3,2} t_2)^{t'_3}
	\nonumber \\
\times
	\frac{1}{t'_4!} (K_{4,1} t_1 + K_{4,2} t_2 + K_{4,3} t_3)^{t'_4}
\end{align}
where $t_j = t'_j + 1$ if $j=\uv$ or $\vv$, and $t_j=t'_j$ otherwise.
The factor of $0^{t'_1}$ forces $t'_1$ to be zero, but we keep it in this form to help see the pattern.
We can replace $t_j$ with $t_j'$ by defining
\begin{align}\label{eq:Kbar-def}
\bar{K}_j = \sum_{i<j} K_{j,i} (\delta_{i,\uv} + \delta_{i,\vv}) = K_{j,\uv} + K_{j,\vv},
\end{align}
where the last equality comes from the fact that $K_{j,i} = 0$ if $i \ge j$.
So for example,
\begin{align}
	\frac{1}{t'_3!} (K_{3,1} t_1 + K_{3,2} t_2)^{t'_3}
	=
	\frac{1}{t'_3!} (K_{3,1} t'_1 + K_{3,2} t'_2 + \bar{K}_3)^{t'_3}.
\end{align}
In particular, we have $\bar{K}_1=0$,  since $K_{1,i}=0$ for any $i$.
Having written everything in terms of $t'_j$, we can now drop the prime to get
\begin{align} \label{eq:luv-explicit}
\luv(t) = X_\uv X_\vv \sum_{t_1+t_2+t_3+t_4=t-2}
	\frac{1}{t_1!} \bar{K}_1^{t_1}
	\times
	\frac{1}{t_2!} (K_{2,1} t_1 + \bar{K}_2 )^{t_2}
	\times
	\frac{1}{t_3!} (K_{3,1} t_1 + K_{3,2} t_2 + \bar{K}_3)^{t_3} \nonumber \\
	\times
	\frac{1}{t_4!} (K_{4,1} t_1 + K_{4,2} t_2 + K_{4,3} t_3 + \bar{K}_4)^{t_4} .
\end{align}
And we see that $\luv$ is indeed independent of $n$.

Assuming \eqref{eq:Suv=luv} is true and letting $\luv(0)=\luv(1)=0$ (for the case where $\uv,\vv\not\in A_{p+1}$), we have 
\begin{align} \label{eq:Suv-inf}
\lim_{n\to\infty} \Suv = \sum_{t=0}^\infty \luv(t) = X_\uv X_\vv \sum_{t_1,t_2,t_3,t_4=0}^\infty
	\frac{1}{t_1!} \bar{K}_1^{t_1}
	\times
	\frac{1}{t_2!} (K_{2,1} t_1 + \bar{K}_2 )^{t_2}
	\times
	\frac{1}{t_3!} (K_{3,1} t_1 + K_{3,2} t_2 + \bar{K}_3)^{t_3} \nonumber \\
	\times
	\frac{1}{t_4!} (K_{4,1} t_1 + K_{4,2} t_2 + K_{4,3} t_3 + \bar{K}_4)^{t_4}.
\end{align}
Note the rearrangement of the terms in this sum is justified because the series is absolutely convergent, as can be seen by replacing the $K$'s by their norms and summing explicitly, first on $t_4$, then $t_3$, etc.

\subparagraph{Proving \eqref{eq:Suv=luv}}---
We now take an aside to establish the truth of \eqref{eq:Suv=luv}, but the reader can skip to \eqref{eq:Suv-evaluation-start} without interruption since it does not help us in evaluating $\Suv$ for the first moment.
We need to bound $|\suv(t,n)|$ by $\buv(t)$ as in \eqref{eq:buvdef}, and show that $\buv$ gives a convergent series as in \eqref{eq:buv-converge}, then from \eqref{eq:Suv=luv-ineq} we get \eqref{eq:Suv=luv}.
The first goal is to derive an $n$-independent bound $\buv(t)$ where $|\suv(t,n)| \le \buv(t)$, for which we need to return to \eqref{eq:Suv-with-I}.
Plugging \eqref{eq:Idef} into \eqref{eq:Suv-with-I}, we get
\begin{align} \label{eq:suv-exact}
\suv(t,n)
= 
\binom{n}{t}  \sum_{\{n_\bv:~ \bv\in \Ac\}}
\binom{t}{\{n_\bv\}}
 \prod_{\bv, \bv'\in \Ac} g_{\bv\bv'}^{n_\bv n_{\bv'}/2} 
 	  \prod_{\bv \in \Ac} Q_\bv^{n_\bv}  \underbrace{\Big[\sum_{\av \in A_{p+1}} Q_\av \prod_{\bv\in \Ac} g_{\av\bv}^{n_\bv}\Big]^{n-t}}_{\I} \Big(\frac{n_\uv n_\vv}{n^2}\Big)
.
\end{align}
We observe that since everyone plays well with $\av \in A_{p+1}$, we can write the factor
\begin{align}
g_{\av\bv}^{n_{\bv}} g_{\av\bar\bv}^{n_{\bar \bv}} = g_{\av\bv}^{n_{\bv}+n_{\bar \bv}} = g_{\av\bv}^{t_\bv}
\end{align}
where $t_\bv = n_\bv + n_{\bar\bv}$.
Thus, we can bipartition $\Ac = D \cup D^c$ such that if $\bv \in D$ then $\bar\bv \in D^c$, and write
\begin{align}
\I = \Big[{\sum_{\av \in A_{p+1}} Q_\av \prod_{\bv\in \Ac} g_{\av\bv}^{n_\bv}}\Big]^{n-t} = \Big[\sum_{\av \in A_{p+1}} Q_\av \prod_{\bv\in D} g_{\av\bv}^{t_\bv}\Big]^{n-t}
\end{align}
which is a function of $\{t_\bv\}$ and not $\{n_\bv\}$. We also note that since $0\le g_{\av\bv} \le1$, and $0 \le Q_\av \le 1$ when $\av\in A_{p+1}$, we have
\begin{align} \label{eq:I-bound}
0 \le \I \le \left[{\textstyle \sum_{\av \in A_{p+1}} Q_\av }\right]^{n-t} = 1 \,.
\end{align}
Then we can bound \eqref{eq:suv-exact} by
\begin{align}
\bigg|
\suv(t,n)
\bigg|
&= 
\bigg|
\mvcenter{\sum_{\{t_\dv: \dv\in D\}}}
\binom{n}{t}
 \binom{t}{\{t_\dv\}}  \I(\{t_\dv\}) 
\mvcenter{\sum_{ \{ n_\dv+n_{\bar\dv} = t_\dv:\, \dv\in D \}} }
\left[ \mvcenter{\prod_{\dv \in D}} \binom{t_\dv}{n_\dv, n_{\bar\dv}}
	\bigg(\mvcenter{ \prod_{\bv, \bv'\in \Ac} g_{\bv\bv'}^{n_\bv n_{\bv'}/2}} \bigg)
 	 \bigg(\mvcenter{\prod_{\bv \in \Ac} Q_\bv^{n_\bv}} \bigg)
	 \frac{n_\uv n_\vv}{n^2}
	 \right]
\bigg|
\nonumber \\
&\le 
\mvcenter{\sum_{\{t_\dv: \dv\in D\}}}
\binom{n}{t}
 \binom{t}{\{t_\dv\}}  
\left|
 \mvcenter{\sum_{ \{n_\dv+n_{\bar\dv} = t_\dv:\, \dv \in D \} } }
 \left[
 \mvcenter{\prod_{\dv \in D}} \binom{t_\dv}{n_\dv, n_{\bar\dv}}
	\bigg(\mvcenter{ \prod_{\bv, \bv'\in \Ac} g_{\bv\bv'}^{n_\bv n_{\bv'}/2}} \bigg)
 	 \bigg(\mvcenter{\prod_{\bv \in \Ac} Q_\bv^{n_\bv}} \bigg)
	  \frac{n_\uv n_\vv}{n^2}
	  \right]
\right|
 \label{eq:suv-bound}
\end{align}
where we pulled $\I$ out of the sum on $n_\bv$'s and used the fact that $|\sum_k A_k B_k| \le \sum_k |A_k| |B_k|$.

Now we want to calculate an $n$-independent bound on the RHS of \eqref{eq:suv-bound}. We will use the same strategy as before that gave us $\luv$, where we looked at an example where $B$ has 8 elements.
Then we have
\begin{align} \label{eq:buv-example}
\bigg|\suv(t,n)\bigg| \le   \mvcenter{\sum_{t_1+t_2+t_3+t_4=t}}  \binom{n}{t} \binom{t}{t_1, t_2, t_3, t_4}~ \left|\mvcenter{\sum_{\{n_j + n_{\bar{j}} = t_j\}}}\right. 
 &\binom{t_1}{n_1, n_{\bar1}} \gf{1}{2} \gf{1}{3} \gf{1}{4} \Qf{1} \nonumber\\
\times&\binom{t_2}{n_2, n_{\bar2}} \gf{2}{3} \gf{2}{4} \Qf{2} \nonumber\\
\times&\binom{t_3}{n_3, n_{\bar3}} \gf{3}{4} \Qf{3} \nonumber \\
\times&\binom{t_4}{n_4, n_{\bar4}} \Qf{4} \nonumber\\
\times& \Big(\frac{n_\uv}{n}\Big)\Big( \frac{n_\vv}{n}\Big)
\Big|.
\end{align}
Then we obtain from \eqref{eq:suv-bound} an expression that is similar to \eqref{eq:Suv-example} obtained from \eqref{eq:Suv-simple}, except every $X_\bv$ is replaced by $Q_\bv$.

We can look at the same examples summing over $n_3$ and $n_{\bar3}$ as before.
Consider first again the case where $\uv,\vv \neq 3, \bar3$.
We want to bound this sum, along with the relevant factor from the multinomial coefficient involving $t_3$.
By looking at \eqref{eq:buv-example}, and noting that $n!/(n-t)!\le n^t$, the sum over $n_3$ and $n_{\bar3}$ along with the combinatorial factor gives
\begin{align}
\frac{n^{t_3}}{t_3!} \Big(Q_3 \gt{1}{3} \gt{2}{3} + Q_{\bar3} \gt{1}{\bar3} \gt{2}{\bar3}\Big)^{t_3} .
 \label{eq:bound-sum-n3}
\end{align}
We will use the following identity, which is a simple consequence of the Mean Value Theorem:
For $a,b>0$, we have
\begin{align} \label{eq:MVident}
|e^{-b} - e^{-a}|  \le |a-b| ~.
\end{align}
Then the sum on $n_3$ and $n_{\bar 3}$ is bounded by
\begin{align}
\left|\frac{n^{t_3}}{t_3!} \Big(Q_3 \gt{1}{3} \gt{2}{3} + Q_{\bar3} \gt{1}{\bar3} \gt{2}{\bar3}\Big)^{t_3} \right|
&= \frac{|Q_3|^{t_3} n^{t_3}}{t_3!} \left|\exp\Big[{-\frac{1}{2n} (\Phi^2_{13} t_1 + \Phi^2_{23} t_2)}\Big] - \exp\Big[{-\frac{1}{2n} (\Phi^2_{1\bar3} t_1 + \Phi^2_{2\bar3} t_2)}\Big]\right|^{t_3} \nonumber \\
&\le \frac{|Q_3|^{t_3} n^{t_3}}{t_3!} \left| \frac{1}{2n} \left[(\Phi^2_{\bar31} - \Phi^2_{31}) t_1 + (\Phi^2_{\bar32} - \Phi^2_{32}) t_2\right]\right|^{t_3} \nonumber \\
&= \frac{1}{t_3!} |F_3|^{-t_3} |K_{3,1}t_1 + K_{3,2} t_2|^{t_3} 
\end{align}
since $Q_\bv = X_\bv/F_\bv$.

We now consider the second example where we sum over $n_3$ and $n_{\bar3}$, but $\uv=3$.
Including the factor of $n_\uv/n$ in the sum, we get an expression just like \eqref{eq:sum-n3-u3}, but with $X_\bv$ replaced by $Q_\bv$:
\begin{align}
\frac{1}{n}\frac{n^{t_3}}{t_3!} t_3Q_3 \gt{1}{3} \gt{2}{3} \Big(Q_3 \gt{1}{3} \gt{2}{3} + Q_{\bar3} \gt{1}{\bar3} \gt{2}{\bar3}\Big)^{t_3-1}.
\end{align}
We can then apply the identity \eqref{eq:MVident} in the example above, and use the fact that $0 < g_{\av\bv} \le 1$ to show that for general $t$, this sum over $n_3$ and $n_{\bar 3}$ is bounded:
\begin{align}
\left|\frac{1}{n}\frac{n^{t_3}}{t_3!} t_3 Q_3 \gt{1}{3} \gt{2}{3} \Big(Q_3 \gt{1}{3} \gt{2}{3} + Q_{\bar3} \gt{1}{\bar3} \gt{2}{\bar3}\Big)^{t_3-1}\right| 
\le \frac{1}{t_3!} t_3 |Q_3| |F_3|^{1-t_3} |K_{3,1} t_1 +  K_{3,2} t_2|^{t_3-1}.
\end{align}

Now we can combine the two examples above just like before in \eqref{eq:luv-first}, and use the same trick that got us $\luv$ in \eqref{eq:luv-explicit} to obtain a bound 
\begin{align} 
|\suv(t,n)| &\le |Q_\uv| |Q_\vv| \sum_{t_1+t_2+t_3+t_4=t-2}
	\frac{\Ff{1}}{t_1!} \bar{K}_1^{t_1}
	\times
	\frac{\Ff{2}}{t_2!} |K_{2,1} t_1 + \bar{K}_2 |^{t_2}
	\times
	\frac{\Ff{3}}{t_3!} |K_{3,1} t_1 + K_{3,2} t_2 + \bar{K}_3|^{t_3} \nonumber \\
	&\qquad \qquad \qquad \qquad \qquad \qquad \qquad\qquad \quad     \times
	\frac{\Ff{4}}{t_4!} |K_{4,1} t_1 + K_{4,2} t_2 + K_{4,3} t_3 + \bar{K}_4|^{t_4}.
\end{align}
Then we can define
\begin{align}\label{eq:buv-explicit}
\buv(t) &= |Q_\uv| |Q_\vv| \hspace{-5pt} \sum_{t_1+t_2+t_3+t_4=t-2}\hspace{-5pt}
	\frac{\Ff{1}}{t_1!} \bar{K}_1^{t_1}
	\times
	\frac{\Ff{2}}{t_2!} (|K_{2,1}| t_1 + |\bar{K}_2| )^{t_2}
	\times
	\frac{\Ff{3}}{t_3!} (|K_{3,1}| t_1 + |K_{3,2}| t_2 + |\bar{K}_3|)^{t_3} \nonumber \\
	&\qquad \qquad \qquad \qquad \qquad \qquad \qquad\qquad \quad     \times
	\frac{\Ff{4}}{t_4!} (|K_{4,1}| t_1 + |K_{4,2}| t_2 + |K_{4,3}| t_3 + |\bar{K}_4|)^{t_4}
\end{align}
so $|\suv(t,n)|\le \buv(t)$.

We now  show \eqref{eq:buv-converge}.
Note when summing $\buv(t)$ from $t=2$ to $\infty$, 
we in fact sum over all values of $t_j$ from $0$ to $\infty$. Then 
\begin{align}
\sum_{t=2}^\infty \buv(t) &= |Q_\uv| |Q_\vv|
 \hspace{-3pt}\sum_{t_1,t_2,t_3,t_4=0}^\infty  \hspace{-3pt}
	\frac{\Ff{1}}{t_1!} \bar{K}_1^{t_1}
	\times
	\frac{\Ff{2}}{t_2!} (|K_{2,1}| t_1 + |\bar{K}_2 |)^{t_2}
	\times
	\frac{\Ff{3}}{t_3!} (|K_{3,1}| t_1 + |K_{3,2}| t_2 + |\bar{K}_3|)^{t_3} \nonumber \\
	&\qquad \qquad \qquad \qquad \qquad \qquad \qquad    \times
	\frac{\Ff{4}}{t_4!} (|K_{4,1}| t_1 + |K_{4,2}| t_2 + |K_{4,3}| t_3 + |\bar{K}_4|)^{t_4}
\end{align}
can be summed explicitly, 
summing on $t_4$, then $t_3$, then $t_2$, then $t_1$, which shows that the series converges and justifies the rearrangement of the sum.
And hence \eqref{eq:Suv=luv} is true.
Our remaining task is thus to give an iterative procedure for explicitly evaluating $\sum_{t=0}^\infty \luv(t)$.

\subparagraph{Iterative procedure to evaluate \eqref{eq:Suv-inf}}---
Here we show how to evaluate $\lim_{n\to\infty} \Suv$ given in \eqref{eq:Suv-inf} explicitly with an iterative procedure  that can be generalized to arbitrarily many $t_j$'s.
To proceed, we first multiply the summand in \eqref{eq:Suv-inf} by $\Am_1^{t_1} \Am_2^{t_2} \Am_3^{t_3} \Am_4^{t_4} \Am $ where we initialize $\Am_1=\Am_2=\Am_3=\Am_4=\Am =1$.
This gives
\begin{align} \label{eq:Suv-evaluation-start}
\lim_{n\to\infty} \Suv  = X_\uv X_\vv \sum_{t_1,t_2,t_3,t_4=0}^\infty \frac{\bar{K}_1^{t_1}}{t_1!} \frac{ (K_{2,1} t_1 + \bar{K}_2)^{t_2}}{t_2!} \frac{ (K_{3,1} t_1 + K_{3,2} t_2 + \bar{K}_3)^{t_3}}{t_3!} \nonumber \\
	\times
	\frac{ (K_{4,1} t_1 + K_{4,2} t_2 + K_{4,3} t_3 + \bar{K}_4)^{t_4}}{t_4!} \Am_1^{t_1} \Am_2^{t_2} \Am_3^{t_3} \Am_4^{t_4} \Am.
\end{align}
Now, summing on $t_4$, we get
\begin{align}
\lim_{n\to\infty} \Suv  = X_\uv X_\vv \sum_{t_1,t_2,t_3=0}^\infty \frac{\bar{K}_1^{t_1}}{t_1!} \frac{ (K_{2,1} t_1 + \bar{K}_2)^{t_2}}{t_2!} \frac{ (K_{3,1} t_1 + K_{3,2} t_2 + \bar{K}_3)^{t_3}}{t_3!} \nonumber \\
\times~
e^{\Am_4 K_{4,1} t_1} e^{\Am_4 K_{4,2}t_2} e^{\Am_4 K_{4,3} t_3} e^{\Am_4 \bar{K}_4} \Am_1^{t_1} \Am_2^{t_2} \Am_3^{t_3} \Am,
\end{align}
Let
\begin{align}
\Am_1 \gets \Am_1 e^{\Am_4 K_{4,1}}, \qquad
\Am_2 \gets \Am_2 e^{\Am_4 K_{4,2}}, \qquad
\Am_3 \gets \Am_3 e^{\Am_4 K_{4,3}}, \qquad
\Am \gets \Am e^{\Am_4 \bar{K}_4}.
\end{align}
With the new $\Am_1$, $\Am_2$, $\Am_3$ and $\Am$ we have
\begin{align}
\lim_{n\to\infty} \Suv  = X_\uv X_\vv \sum_{t_1,t_2,t_3=0}^\infty \frac{\bar{K}_1^{t_1}}{t_1!} \frac{ (K_{2,1} t_1 + \bar{K}_2)^{t_2}}{t_2!} \frac{ (K_{3,1} t_1 + K_{3,2} t_2 + \bar{K}_3)^{t_3}}{t_3!} \Am_1^{t_1} \Am_2^{t_2} \Am_3^{t_3} \Am.
\end{align}
Now the sum on $t_3$ updates the following symbols
\begin{align}
\Am_1 \gets \Am_1 e^{\Am_3 K_{3,1}}, \qquad
\Am_2 \gets \Am_2 e^{\Am_3 K_{3,2}}, \qquad
\Am \gets \Am e^{ \Am_3 \bar{K}_3}.
\end{align}
Then summing over $t_2$ gives
\begin{align}
\Am_1 \gets \Am_1 e^{\Am_2 K_{2,1}}, \qquad
\Am \gets \Am e^{ \Am_2 \bar{K}_2}.
\end{align}
And finally, the sum over $t_2$ yields
\begin{align}
\Am \gets \Am e^{\Am_1 \bar{K}_1}.
\end{align}
In summary, the final values of $\Am_1, \Am_2, \Am_3, \Am_4$ and $\Am$ are given by the following:
\begin{equation}
\begin{split}
\Am_4 &= 1, \\
\Am_3 &= e^{\Am_4 K_{4,3}}, \\
\Am_2 &= e^{\Am_4 K_{4,2}} e^{\Am_3 K_{3,2}}, \\
\Am_1 &= e^{\Am_4 K_{4,1}} e^{\Am_3 K_{3,1}} e^{\Am_2 K_{2,1}}, \\
\Am &= e^{\Am_4 \bar{K}_4}e^{\Am_3 \bar{K}_3} e^{\Am_2 \bar{K}_2} e^{\Am_1 \bar{K}_1}.
\end{split}
\end{equation}

More generally, when the sum in \eqref{eq:Suv-inf} has $N=|B|/2$ levels, we have
\begin{gather}
\label{eq:Ydef}
\Am_{N} = 1, \qquad
\Am_j = \exp\left({\textstyle \sum_{k > j} \Am_k K_{k,j}}\right) \qquad \text{for }  N-1 \ge j \ge 1,\\
\text{and} \qquad
Y = \exp\left({\textstyle \sum_{k=1}^N \Am_j \bar{K}_j}\right).
\end{gather}
Then $\lim_{n\to\infty} \Suv = X_\uv X_\vv Y$.

In fact, we can further simplify this by showing that
\begin{align}
\Am = \Am_\uv \Am_\vv \,.
\end{align}
This can be seen via the following.
Recall that the  definition of $\bar{K}_j$ from \eqref{eq:Kbar-def} in terms of $\uv$ and $\vv$ is
\begin{align}
\bar{K}_j =  K_{j, \uv} + K_{j,\vv}\,.
\end{align}
Then we have
\begin{align}
\Am 
= \exp\left({\textstyle \sum_{j=1}^{N} \Am_j K_{j,\uv}}\right) \exp\left({\textstyle \sum_{j=1}^{N} \Am_j K_{j,\vv}}\right)
= \exp\left({\textstyle \sum_{j> \uv} \Am_j K_{j,\uv}}\right) \exp\left({\textstyle \sum_{j > \vv} \Am_j K_{j,\vv}}\right) = Y_\uv Y_\vv
\end{align}
where we used the fact that $K_{j,\uv} = 0$ if $j \le \uv$.

And thus for $\uv,\vv \not\in A_{p+1}$,
\begin{equation} \label{eq:Suv-W}
\lim_{n\to\infty} \Suv = \sum_{t=0}^\infty \luv(t) = X_\uv X_\vv Y_\uv Y_\vv = W_\uv W_\vv, \qquad \text{where} \qquad W_\uv = X_\uv Y_\uv
\end{equation}
as we claimed in \eqref{eq:Suv} of Lemma~\ref{lem:W}.

We now want to give a formula for $W_\uv$ directly instead of using $Y_\uv$.
It will be convenient to define
\begin{align} \label{eq:Delta-def}
\Delta_{\bv,\cv} = \frac12 (\Phi_{\bar\bv\cv}^2 - \Phi_{\bv\cv}^2),
\end{align}
so that  $K_{\bv,\cv} =  X_\bv \Delta_{\bv,\cv} $.
Furthermore, in what's above we are only iterating over $j$ that indexes a unique pair of configurations $j, \bar{j} \in B = A \setminus A_{p+1}$, therefore implicitly we have $Y_j = Y_{\bar j}$.
This also means 
\begin{equation}
W_{\bar{j}} = X_{\bar{j}} Y_{\bar{j}} = -X_j Y_j = - W_{j}.
\end{equation}
Now,  for each $j = |B|/2,|B|/2-1, \ldots, 2, 1$, we have
\begin{align}
W_j &= X_j Y_j = X_j \exp\left({\textstyle \sum_{k > j} Y_{k} K_{k,j}}\right)
= X_j \exp\left({\textstyle \sum_{k > j} Y_{k} X_{k} \Delta_{k,j}}\right) \nonumber \\
&= X_j \exp\left({\textstyle \sum_{k > j} W_{k} \Delta_{k,j}}\right).
\label{eq:W-formula}
\end{align}
In particular, we have $W_{|B|/2} = X_{|B|/2}$.
This gives an iterative formula for $\{W_\uv: \uv \not\in A_{p+1}\}$.

However, what we have shown so far is only a proof of Lemma~\ref{lem:W} when both $\uv, \vv \not\in A_{p+1}$.
It turns out that the formula \eqref{eq:W-formula} for $W_\uv$ easily generalizes to the other case when $\uv \in A_{p+1}$, which we now consider.

\paragraph{Case 2:} Now suppose at least one of $\uv, \vv$ is in $A_{p+1}$.
If $\uv$ is in $A_{p+1}$, then $\uv = (u_1,u_2, \ldots, u_p, u_p, \ldots, u_2, u_1)$, where $u_{-r}=u_r$ for $1\le r\le p$.
It follows that $u^*_r = u_{-r}^*$.
Thus from \eqref{eq:factorized}, we see that if $\uv \in A_{p+1}$, then $\vv$ must not be.
So we need only consider the case where $\uv \in A_{p+1}$ and $\vv \in A\setminus A_{p+1}$.
Let us for now treat the $Q_\av$'s with $\av\in A_{p+1}$ as formal variables, whose values are not yet specified by \eqref{eq:Qdef}.
Then in \eqref{eq:Suv-with-I} we can replace $n_\uv/n$ by
\begin{align} \label{eq:QI}
\frac{n_\uv}{n} \quad
\longleftrightarrow 
\quad
\frac{Q_\uv}{n} \frac{\partial}{\partial Q_\uv} \,.
\end{align}
Differentiating \eqref{eq:Idef} with respect to $Q_\uv$ and then multiplying by $Q_\uv/n$ gives
\begin{align}
\frac{Q_\uv}{n} (n-t)\prod_{\bv\in \Ac} g_{\uv\bv}^{n_\bv}
\left[\sum_{\av \in A_{p+1}} Q_\av \prod_{\bv\in \Ac} g_{\av\bv}^{n_\bv}\right]^{n-t-1} .
\end{align}
For $t\le T$ with $T$ fixed and $n$ large, this is
\begin{align}
Q_\uv \I \times \big(1+O(1/n)\big) = Q_\uv \prod_{\bv\in B} F_\bv^{n_\bv} \times \big(1+O(1/n)\big) ~.
\end{align}
So the net effect of having $n_\uv/n$ with $\uv \in A_{p+1}$ is to have a factor of $Q_\uv$ instead of $X_\uv$ in $\luv$.
Observe that for any $\uv\in A_{p+1}$, if we look at the definition of $F_\uv$ in \eqref{eq:Fdef}, we have $F_\uv=1$ since $\Phi_{\uv\av}=0$ for any $\uv,\av\in A_{p+1}$.
Furthermore, for $\uv \in A_{p+1}$, with whom everyone plays well by Lemma~\ref{lem:ordering}, we have $K_{j,\uv}=\Delta_{j,\uv} =0$ for any $j$.
Then generalizing the formula that defines $W_\uv$ in \eqref{eq:W-formula} to $\uv\in A_{p+1}$, we get $W_\uv = X_\uv = Q_\uv F_\uv = Q_\uv$.
Thus the factor of $Q_\uv$ can also be written as $W_\uv$, for consistency of notation with  \eqref{eq:Suv-W} where $\uv$ can now be any element of $A$.

Hence, in general, for any $\uv, \vv \in A$, where $\uv \neq \vv, \bar\vv$, and at most one of $\uv,\vv$ is in $A_{p+1}$, we have
\begin{align}
\lim_{n\to\infty} \Suv = \sum_{t=0}^\infty \luv(t) = W_\uv W_\vv
\end{align}
as desired in \eqref{eq:Suv} in Lemma~\ref{lem:W}.

\paragraph{Proof of Lemma~\ref{lem:W} -- Second Moment}---
We need to evaluate $\Suvxy$ defined in \eqref{eq:Suvxy-def} and show that in the limit as $n\to\infty$ it is $W_\uv W_\vv W_\xv W_\yv$ as in \eqref{eq:Suvxy}.
Recall that in the first moment calculation of $\Suv$, we only had to consider cases where $\uv\neq \vv,\bar{\vv}$, and we obtained the $W_\uv W_\vv$ result.
Now for the present calculation, if $\uv,\vv,\xv$ and $\yv$ are all distinct, the first moment argument carries through, and we immediately get $W_\uv W_\vv W_\xv W_\yv$.
Looking at \eqref{eq:factorized}, we need not consider $\uv=\vv$ or $\xv=\yv$ when evaluating $\Suvxy$.
Suppose $\uv=\xv$.
Then neither $\vv$ nor $\yv$ can be $\xv$, so we need only consider the case of $n_\uv^2 n_\vv n_\yv$ with $\uv,\vv,\yv$ distinct,
and the case of $n_\uv^2 n_\vv^2$ with $\uv\neq \vv$.
In either case we need to consider $(n_\uv/n)^2$ as a factor in \eqref{eq:Suvxy-def}.
Let's again look at the example where $\uv,\vv,\xv,\yv$ are in $\Ac$, and $\Ac$ has 8 elements as above. Then for $t\le T$ with $T$ fixed and $n$ large, the relevant term in $\Suvxy$ is
\begin{align} \label{eq:Suvxy-example}
s_{\uv,\vv,\xv,\yv}(t,n) =  \sum_{t_1+t_2+t_3+t_4=t} ~\sum_{n_j + n_{\bar{j}}  = t_j}
 &\binom{n}{t} \binom{t}{t_1, t_2, t_3, t_4} \nonumber \\
\times&\binom{t_1}{n_1, n_{\bar1}} \gf{1}{2} \gf{1}{3} \gf{1}{4} \Xf{1} \nonumber\\
\times&\binom{t_2}{n_2, n_{\bar2}} \gf{2}{3} \gf{2}{4} \Xf{2} \nonumber\\
\times&\binom{t_3}{n_3, n_{\bar3}} \gf{3}{4} \Xf{3} \nonumber \\
\times&\binom{t_4}{n_4, n_{\bar4}} \Xf{4} \nonumber\\
\times& \Big(\frac{n_\uv}{n}\Big)\Big( \frac{n_\vv}{n}\Big)\Big( \frac{n_\xv}{n}\Big)\Big( \frac{n_\yv}{n}\Big) \Big(1+O\Big(\frac{1}{n}\Big)\Big).
\end{align}
We can use the identity
\begin{align} \label{eq:binom-ident2}
\sum_{n_j + n_{\bar j} = t_j} n_j^2 \binom{t_j}{n_j n_{\bar{j}}} D_j^{n_j} D_{\bar j}^{n_{\bar j}} 
= t_j D_j (D_j + D_{\bar j})^{t_j-1}
+ t_j(t_j-1)  D_j^2 (D_j + D_{\bar j})^{t_j-2} ~.
\end{align}
Let us apply this to say the sum over $n_3$ and $n_{\bar{3}}$ in \eqref{eq:Suvxy-example} with $\uv=\xv=3$, and we get
\begin{align}
\frac{1}{n} \frac{1}{t_3!}t_3 X_3\gt{1}{3} \gt{2}{3} (K_{3,1}t_1 + K_{3,2} t_2)^{t_3-1}  
+
\frac{1}{t_3!}t_3 (t_3-1) X_3^2 (\gt{1}{3} \gt{2}{3})^2 (K_{3,1}t_1 + K_{3,2} t_2)^{t_3-2}.
\end{align}
Note the first term gets factors of $n$ as $n^{t_3} (1/n)^{t_3-1}(1/n^2) = 1/n$, whereas the second term gets factors of $n$ as $n^{t_3}(1/n)^{t_3-2}(1/n^2)=1$.
Keeping in mind that $\gt{1}{3}\gt{2}{3} = 1+O(1/n)$, to leading order in $1/n$,  we have 
\begin{align}
\frac{1}{t_3!}t_3 (t_3-1) X_3^2 (K_{3,1}t_1 + K_{3,2} t_2)^{t_3-2}.
\end{align}
Let $t_3' = t_3 - 2$, and we get
\begin{align}
\frac{1}{t'_3!} X_3^2 (K_{3,1}t_1 + K_{3,2} t_2)^{t'_3}
\end{align}
and we set $t_3 = t'_3 + 2$ in all other places that $t_3$ appears.
The evaluation of the sum proceeds similarly as the case of $\Suv$, except now \eqref{eq:Kbar-def} is replaced by $\bar{K}_j= K_{j,\uv} + K_{j,\vv} + K_{j,\xv} + K_{j,\yv}$.
When $\uv=\xv$, we have $2K_{j,\uv}$ appearing in $\bar{K}_j$.
This effectively sends $Y_\uv$ to $Y_\uv^2$, and along with $X_\uv^2$ we get for the $n_\uv^2 n_\vv n_\yv$ case $W_\uv^2 W_\vv W_\yv$ as claimed.
The same argument can be worked out for the $n_\uv^2 n_\vv^2$ case, in which case we get $W_\uv^2 W_\vv^2$.

\subsection{Examples for $p=1$ and $p=2$}
Let us see how our formalism works to produce explicit formulas for $\lim_{n\to\infty}\EV_J[\bgbbraket{C/n}]$ by looking at how they apply at $p=1$ and $p=2$.

For $p=1$, the set of configurations $A$ contains just 4 elements as discussed in Section~\ref{sec:p=1}.
Then, $A_{1} = \{+-, -+\}$, and $A_2=A_{p+1} = \{++,--\}$.
It is easy to check all elements of $A$ play well with each other, so $\Delta_{\bv,\cv}=0$ for all $\bv,\cv\in A$.
Then from the formula for $W_\uv$ in \eqref{eq:W-formula} we see that $W_{\uv}=X_\uv$ for all $\uv\in A$.
Then
\begin{align}
W_{++} &= X_{++} =  Q_{++} F_{++} = Q_{++} = \cos^2\beta_1, \nonumber \\
W_{--} &= X_{--} = Q_{--} F_{--} = Q_{--} = \sin^2\beta_1, \nonumber \\
W_{+-} &= X_{+-} =  Q_{+-} F_{+-} = -i\sin\beta_1 \cos\beta_1  e^{-2\gamma_1^2}, \nonumber \\
W_{-+} &= X_{-+} =  Q_{-+} F_{-+}  = i\sin\beta_1 \cos\beta_1  e^{-2\gamma_1^2}.
\end{align}
Furthermore, note that $\av^* =\av$ when $p=1$. So combining \eqref{eq:M1}, \eqref{eq:factorized}, and \eqref{eq:Suv}, we have
\begin{align}
\lim_{n\to\infty} \EV[\bgbbraket{C/n}] &= \frac{i}{2}\gamma_1 \left[\sum_{\uv\in A}(u_1 + u_{-1}) W_\uv\right] \left[\sum_{\vv\in A} (v_{1} - v_{-1}) W_\vv \right] \nonumber \\
&= \frac{i}{2} \gamma_1 (2W_{++} - 2W_{--}) (2W_{+-} - 2W_{-+}) \nonumber \\
&= \gamma_1 e^{-2\gamma_1^2} \sin4\beta_1 
\end{align}
as we have already shown in Section~\ref{sec:p=1}.

Now, let us consider the example of $p=2$, where the set of configurations $A$ has $16$ elements.
We first list $W_\uv$ for $\uv\in A_{p+1} = \{++++,+--+,-++-,----\}$, which are
\begin{align}
W_{++++} &= Q_{++++} = \cos^2\beta_1 \cos^2\beta_2, \nonumber \\
W_{+--+} &= Q_{+--+} = \cos^2\beta_1 \sin^2\beta_2, \nonumber \\
W_{-++-} &= Q_{-++-} = \sin^2\beta_1 \cos^2\beta_2, \nonumber \\
W_{----} &= Q_{----} = \sin^2\beta_1 \sin^2\beta_2.
\end{align}
Now, we need to compute $W_\uv$ for the remaining 12 elements, which constitute the set $\Ac = A\setminus A_{p+1}$. We first bipartition $B=D\cup D^c$ such that if $\bv\in D$ then $\bar\bv \in D^c$.
We also assign indices to configurations of $D$ so that if two configurations have indices $j\le k$ then $j$ plays well with $k$.
For concreteness, we can choose the convention where all configurations of $D$ are chosen so that their first $p$ elements have even parity (i.e., $b_1b_2\cdots b_p=1$). Then for $p=2$, we have
\begin{align} \label{eq:Ddef}
D = \{&1=++--, \quad 2=--++, \quad 3=++-+, \nonumber \\
 & 4=--+-, \quad 5=+++-, \quad 6=---+\}.
\end{align}
Note that $1,2,3,4$ are in $A_1$, whereas 5 and 6 are in $A_2$.
Now we need to consider elements of $D$ that don't play well with others.
After some inspection, the only cases we need to worry about are that 5 and 6 do not play well with 1 and 2.
Then the only non-zero $\Delta_{j,i}$ are
\begin{equation}
\Delta_{6,1} = -\Delta_{6,2} = -\Delta_{5,1} = \Delta_{5,2}= 8 \gamma_1 \gamma_2 .
\end{equation}
Let us define
\begin{align}
\Omega &= 4 \gamma_1 \gamma_2 \cos2\beta_1, \\
\Lambda &=  4  \gamma _1 \gamma _2 \sin 2 \beta _1\exp({-2 \gamma _1^2}),
\end{align}
as shorthands.
We then find all the $W_\uv$'s by applying \eqref{eq:W-formula}, which yields
\begin{align}
W_6 &= X_6 = \frac{i}{2} e^{-2\gamma_1^2}\sin2\beta_1 \sin^2\beta_2, \nonumber \\
W_5 &= X_5 = -\frac{i}{2} e^{-2\gamma_1^2}\sin2\beta_1 \cos^2\beta_2, \nonumber \\
W_4 &= X_4 = \frac{i}{2} e^{-2(\gamma_1^2+\gamma_2^2) + \Omega} \sin^2\beta_1 \sin 2\beta_2, \nonumber \\
W_3 &= X_3 = -\frac{i}{2} e^{-2(\gamma_1^2+\gamma_2^2) - \Omega} \cos^2\beta_1 \sin 2\beta_2, \nonumber \\
W_2 &= X_2 e^{W_6 \Delta_{6,2} + W_5 \Delta_{5,2}} = -\frac14 e^{-2\gamma_2^2 - i\Lambda} \sin 2\beta_1 \sin2\beta_2,  \nonumber \\
W_1 &= X_1 e^{W_6 \Delta_{6,1} + W_5 \Delta_{5,1}} = -\frac14 e^{-2\gamma_2^2+ i\Lambda} \sin 2\beta_1 \sin2\beta_2.
\end{align}
Note that $W_\uv$ for the other elements $\uv\in D^c$ are given via  $W_{\bar\uv} = -W_{\uv}$.
Thus, for $p=2$, we have
\begin{align}
\lim_{n\to\infty} \EV[\bgbbraket{C/n}] &= \frac{i}{2}\sum_{r=1}^2 \gamma_r \left[\sum_{\uv\in A}(u_r^* + u_{-r}^*) W_\uv\right] \left[\sum_{\vv\in A} (v_{r}^* - v_{-r}^*) W_\vv \right] \nonumber \\
&= \left[\gamma_1 \Gamma_1(\vparam)+ \gamma_2 \Gamma_2(\vparam)\right] \exp\big[{-2 (\gamma_1^2 + \gamma _2^2)}\big],
 \label{eq:p=2}
\end{align}
where
\begin{align}
\Gamma_1(\vparam) &=
2\Big(e^{2\gamma_2^2} \sin2\beta_1 \cos 2\beta_2 + \sin 2\beta_2 (\cos 2\beta_1 \cosh \Omega -  \sinh \Omega)\Big)   \nonumber \\
   &\quad \times 
   \left( \cos 2 \beta _1 \cos 2 \beta _2 
		- e^{-2 \gamma _2^2} \sin 2 \beta _1 \sin 2 \beta _2
   \cos \Lambda \right), \\
\Gamma_2(\vparam) &=
\left(\cosh \Omega + e^{2\gamma_1^2} \sin 2\beta_1\sin \Lambda - \cos2\beta_1 \sinh \Omega\right)\sin 4\beta_2.
\end{align}

%
%To see this, let us denote $W_a = W_{++++}$, $W_{b} = W_{+--+}$, $W_c = W_{-++-}$, and $W_d= W_{----}$.
%And using our index notation for the elements of $D$ given in \eqref{eq:Ddef}, then \eqref{eq:p=2} is explicitly
%\begin{align*}
% && \quad 2i\gamma_1
%	& (W_{++++} + W_{++--} + W_{----} + W_{--++} - W_{-+-+} - W_{-++-} - W_{+--+} - W_{+-+-}) \\
%&&	\times~& (W_{+++-} + W_{++-+} + W_{---+} + W_{--+-} - W_{-+--} - W_{-+++} - W_{+---} - W_{+-++}) \\
%&&+ ~ 2i\gamma_2
%	& (W_{++++} + W_{+++-} + W_{-++-} + W_{-+++} - W_{----} - W_{---+} - W_{+---} - W_{+--+}) \\
%&&	\times~& (W_{++-+} + W_{++--} + W_{-+--} + W_{-+-+} - W_{--+-} - W_{--++} - W_{+-+-} - W_{+-++}) \\
%&&=\quad 2i\gamma_1
%	& (W_{a} + W_{1} + W_{d} + W_{2} - W_{\bar2} - W_{c} - W_{b} - W_{\bar1})  (W_{5} + W_{3} + W_{6} + W_{4} - W_{\bar4} - W_{\bar5} - W_{\bar6} - W_{\bar3}) \\
%&&+ ~ 2i\gamma_2
%	& (W_{a} + W_{5} + W_{c} + W_{\bar5} - W_{d} - W_{6} - W_{\bar6} - W_{b}) (W_{3} + W_{1} + W_{\bar4} + W_{\bar2} - W_{4} - W_{2} - W_{\bar1} - W_{\bar3}) \\
%&&= \quad 2i\gamma_1
%& (W_a + W_d - W_b - W_c + 2W_1 + 2W_2)(2W_3 + 2W_4 + 2W_5 + 2W_6) \\
%&& + ~ 2i\gamma_2 &( W_a + W_c - W_b - W_d)(2W_1 + 2W_3 - 2W_2 - 2W_4).
%\end{align*}
%Plugging in values of $W_i$ gives the expression above.

\vspace{-4pt}
\section{Discussion\label{sec:discussion}}
%\vspace{-1pt}

The QAOA is a general purpose quantum optimizer whose computational power has not been fully explored.  Early applications were to problems such as MaxCut on bounded degree graphs.  Here it is known that for fixed depth $p$, as the graph size increases there can be upper bounds on how well the algorithm can perform.
For example, for MaxCut on 3-regular bipartite graphs (i.e., completely satisfiable) with $o(n)$ subgraphs that are triangles, squares or pentagons,
the achieved approximation ratio at $p=2$ is 0.7559~\cite{QAOA}. (Poor reader, don't think this has anything to do with the Parisi constant of 0.763\ldots in  \eqref{eq:Parisi}.)
For fixed degree graphs we need $p$ to increase at least logarithmically so that each edge sees the whole graph. What we mean by ``sees'' is this:  Consider a cost function that is a sum of local terms described in terms of a graph problem. Pick one term say $C_\alpha$ corresponding to clause $\alpha$.  Then in evaluating \eqref{eq:QAOA_obj} we look at 
\begin{equation}
	U^\dag(C, \gamma_1)\cdots U^\dagger(B, \beta_p)  C_\alpha  U(B, \beta_p) \cdots U(C, \gamma_1)
\end{equation}
as in \eqref{eq:wavefunction}.
The locality of this operator is found by taking each qubit in $C_\alpha$ and moving out a distance $p$ on the instance graph to see which other qubits are involved.  If $p$ does not grow with $n$, then for large $n$, not all qubits are ``seen'' by the clause. Without seeing the whole graph the algorithm can not detect contradictions from say large loops.
Therefore, to have hope of success on bounded degree graphs, we need $p$ to grow at least as a big enough constant times $\log(n)$.  For near term quantum computers with even 1000 qubits, this is easily achievable since for say 3-regular graphs at $p=8$, each clause is guaranteed to see the whole graph. 

For problems on graphs whose degree grows with the number of bits, the previous arguments do not apply. For the Sherrington-Kirkpatrick model, the associated graph is the complete graph.
Therefore, each clause sees all the qubits at $p=1$, and sees all the qubits as well as all the edges at $p=2$.
Hence, one may imagine that for large fixed $p$, that in the large $n$ limit, the QAOA will perform well.
We have explored this prospect in this work by obtaining a formula for the expected value of the cost function for an arbitrary QAOA state in the limit as $n\to\infty$. 
The complexity of our formula grows as $O(16^p)$, so without too much trouble we could  optimize up to $p=8$ and evaluate up to $p=12$, and our results are shown in Figure~\ref{fig:Cexp}. 
(A recent result in Ref.~\cite{basso2022quantum} improves the complexity to $O(p^24^p)$ which enables evaluation up to $p=20$.)
It is currently not possible to know for certain if for large $p$ the performance asymptotes to the Parisi value \eqref{eq:Parisi} or something less optimal.
At $p=4$, we have crossed the energy at the phase transition which is  $-0.5 n$.
At $p=11$, we have surpassed the performance of the spectral relaxation and the standard semi-definite programming algorithms which yield $-2n/\pi$.
%It appears possible from fitting that the performance is described by a power-law convergence to the lowest energy.
Nevertheless, as we discussed in Section~\ref{sec:SK}, the recent classical algorithm by Montanari~\cite{Montanari2018} can efficiently find a string whose energy is between $(1-\epsilon)$ and 1 times the lowest energy, assuming a widely believed conjecture that the SK model has ``full replica symmetry breaking.''
Hence, we can only hope to match this with the QAOA.
We want to improve our techniques to determine the asymptotic behavior of the QAOA's performance.
Furthermore, we can apply our techniques to other problems where we average over instances. 
For example, we can consider generalizations of the SK model where multi-spin couplings are allowed. Since some of these models are shown to have no ``full replica symmetry breaking'' \cite{CGPR19}, it is believed that classical algorithms like Montanari's will fail to find near-optimal solutions~\cite{GJ19,AM20}.
It would be interesting to see how the QAOA performs for those problems, and whether it will outperform the classical algorithms. %encounter similar bottlenecks.
We also imagine the possibility of extending our techniques to $p$ growing with $n$.

\section*{Acknowledgement}
We are grateful to Chris Baldwin, Mehran Kardar, Chris Laumann, Giorgio Parisi, and Shivaji Sondhi for helpful discussions. We thank the hospitality of the Galileo Galilei Institute where many discussions took place.
We also thank David Gamarnik for valuable discussions.
We thank four anonymous referees for their careful reading of the paper and their suggestions, which have improved it.
E.F.~thanks Hartmut Neven for discussion and encouragement.
L.Z.~thanks the Google AI Quantum team for a summer internship where much of this work was done.
L.Z.~acknowledges partial support from the National Science Foundation.
This work was partially funded by ARO contract W911NF-17-1-0433.

\bibliographystyle{quantum}

\bibliography{QAOA_SK_refs}

\end{document}